%% file: main.tex
\documentclass[aps,pra,10pt,notitlepage,twocolumn,nofootinbib,superscriptaddress]{revtex4-2}
\input{pretex}

\usepackage{qcircuit}
\usepackage{tikz}
\usetikzlibrary{arrows.meta,positioning}
\makeatletter
\@ifundefined{proof}{\newenvironment{proof}[1][Proof]{\par\noindent\textit{#1.}\ }{\hfill$\square$\par}}{}
\makeatother
\graphicspath{{figures/}}
\begin{document}
\title{Collective Quantum State Preparation Is Nonadditive}

\author{Lin Zhu}
\affiliation{Thrust of Artificial Intelligence, Information Hub,\\
The Hong Kong University of Science and Technology (Guangzhou),
Guangdong 511453, China}
\affiliation{Quantum Science Center of Guangdong--Hong Kong--Macao
Greater Bay Area, Shenzhen 518045, China}

\author{Ranyiliu Chen}
\email{chenranyiliu@quantumsc.cn}
\affiliation{Quantum Science Center of Guangdong--Hong Kong--Macao
Greater Bay Area, Shenzhen 518045, China}

\author{Xin Wang}
\affiliation{Thrust of Artificial Intelligence, Information Hub,\\
The Hong Kong University of Science and Technology (Guangzhou),
Guangdong 511453, China}

\author{Shenggen Zheng}
\email{zhengshenggen@quantumsc.cn}
\affiliation{Quantum Science Center of Guangdong--Hong Kong--Macao
Greater Bay Area, Shenzhen 518045, China}

\begin{abstract}
Quantum state preparation is a basic primitive of quantum computation and control, but its resource cost is usually analyzed for one target state or under an independent-copy construction. Here we study the many-copy problem under weighted Pauli control, where a Pauli-product rotation $e^{i\phi P}$ is charged by its rotation angle $|\phi|$, independently of the weight of $P$. We derive the exact single-qubit preparation cost and show that it is strictly nonadditive: a four-pulse collective protocol prepares two $T$ states at lower cost than two optimal one-copy protocols. We further identify finite regions of the Bloch sphere with collective saving, including trajectories that remain product throughout, showing that transient entanglement is not the only mechanism behind the advantage. Complementarily, a dimension-independent R\'enyi-entropy speed limit gives an extensive lower bound for every coherent target and an exponential fidelity penalty below a finite preparation rate. These results establish a many-copy geometry of state synthesis in which uncorrelated outputs can nevertheless benefit from collective control.
\end{abstract}
\maketitle

\textit{\textbf{Introduction.}}---Quantum state preparation provides the interface between a desired mathematical state and a controllable quantum device, and it is a basic subroutine in quantum algorithms, simulation, sensing, and fault-tolerant architectures. For a generic target, the preparation problem has traditionally been formulated in terms of discrete circuit resources, including one- and two-qubit gate counts, circuit depth, ancillary qubits, and space--time tradeoffs~\cite{Mottonen2005,ShendeBullockMarkov2006,PleschBrukner2011,Iten2016,Gui2024}. These works have developed increasingly efficient decompositions for arbitrary states and isometries. Complementary schemes exploit additional structure in the target, for example smooth functional amplitudes, while recent low-depth constructions show how workspace and parallelization can reduce the depth required to generate many target registers~\cite{ZylbermanDebbasch2024,Gui2024}. Such results quantify the complexity of a chosen gate architecture, but they do not directly address whether the \emph{integrated control cost} of producing several identical states is additive.

A complementary line of work views synthesis geometrically. In Nielsen's complexity geometry, and in related formulations of quantum optimal control, a circuit is replaced by a continuous path generated by admissible Hamiltonians and its cost is the length induced by a control norm~\cite{Nielsen2006,NielsenScience2006,NielsenControl2006,DowlingNielsen2008,BrownSusskind2019,Koch2022Control}. This perspective separates the geometry of reachable motion from a particular gate decomposition and makes it possible to ask whether nonlocal generators provide shortcuts. However, most geometric analyses concern a single unitary or a single state. The tensor-power problem introduces a distinct question: even when the target is the uncorrelated product $\psi^{\otimes n}$, can a collective Hamiltonian use the enlarged register more efficiently than $n$ independent controls?

Many-copy optimization and regularization appear throughout quantum information. Resource theories commonly distinguish one-shot costs from asymptotic rates, and collective processing can change the relevant conversion law~\cite{WinterYang2016,StreltsovAdessoPlenio2017,ChitambarGour2019}. Likewise, the $T$ state is central to fault-tolerant quantum computation, where collective coding and distillation strongly affect magic-state overhead~\cite{BravyiKitaev2005,BravyiHaah2012,Litinski2019}. The problem considered here is different from both settings. No class of free operations is imposed, Clifford operations are not assigned zero cost, and the resource is the total angle of Pauli-product rotations. Consequently, coherence monotones, stabilizer rank, and magic-state distillation rates do not determine the preparation cost studied below.

The weighted problem was posed by Knill, Ortiz, and Somma and recorded in the open-problem collection of Kr\"uger and Werner~\cite{KruegerWerner2005}; it has recently been restated as the collective-cost problem for tensor-power state preparation~\cite{QIQCOP2026}. A Pauli-product pulse $e^{i\phi P}$ is assigned charge $|\phi|$ independently of the weight of $P$, so the model isolates the geometric value of collective interactions. This weight independence is intentionally idealized: an angle saving in this metric need not equal a saving in laboratory time or energy unless the corresponding interaction strengths and implementation overheads are also comparable.

In this work, we establish that collective preparation is genuinely nonadditive in this model. First, we solve the unrestricted one-qubit problem exactly, obtaining a closed formula on the full Bloch sphere and a matching calibration lower bound. We then construct a four-pulse two-qubit protocol for $\ket T=(\ket0+e^{i\pi/4}\ket1)/\sqrt2$ whose cost per copy is strictly below the exact one-copy optimum. To our knowledge, this gives the first explicit rigorous separation $C_\infty(\psi)<C_1(\psi)$ for the unrestricted weighted Pauli-product rotation model. Beyond this example, we identify two mechanisms for saving: a finite collective conversion overhead that disappears after regularization, and interacting Hamiltonians that reproduce product-state tangents with a smaller Pauli coefficient norm. The latter mechanism yields certified saving regions even along trajectories that remain product throughout. Finally, we derive dimension-independent R\'enyi-entropy speed limits and a fidelity strong converse. They show that collective control can lower the linear coefficient of preparation cost, but cannot remove the extensive cost of creating computational-basis coherence.

\textit{\textbf{Preparation cost.}}---We now formulate the control model. The computational basis consists of binary-labeled states $\ket{x}$; $\ket{0^m}$ denotes the all-zero blank. Write $I$ for the identity and $X,Y,Z$ for the Pauli matrices, with $X\ket b=\ket{1-b}$, $Z\ket b=(-1)^b\ket b$, and $Y=iXZ$ for $b\in\{0,1\}$. A Pauli product is a tensor product of these operators, and its weight is the number of nonidentity factors. A pure ray identifies normalized state vectors that differ only by a global phase. For an $m$-qubit pure ray $\psi$ and $n$ identical outputs, define
\begin{equation}
 C_n(\psi)=\inf\left\{\sum_j|\phi_j|:
 \prod_j e^{i\phi_jP_j}\ket{0^{mn}}\doteq\ket\psi^{\otimes n}\right\},
 \label{eq:Cn}
\end{equation}
where $P_j\in\{I,X,Y,Z\}^{\otimes mn}$, $\doteq$ denotes equality up to global phase, and products act from right to left. The infimum is over all finite exact pulse sequences on the output register, with no additional workspace. Independent concatenation gives $C_{n+k}\le C_n+C_k$, and therefore the asymptotic cost per copy is the regularized quantity
\begin{equation}
 C_\infty(\psi)=\lim_{n\to\infty}\frac{C_n(\psi)}n
 =\inf_{n\ge1}\frac{C_n(\psi)}n\le C_1(\psi).
 \label{eq:Cinf}
\end{equation}
Thus one finite block with $C_k<kC_1$ already proves a persistent collective advantage for arbitrarily large tensor powers. In continuous control, $\dot\varphi=iH(t)\varphi$ with $H=\sum_Pu_PP$ and real coefficients $u_P$. The corresponding length is defined by the Pauli coefficient norm $\|H\|_{\mathcal P,1}=\sum_{P\ne I}|u_P|$ through $\int\|H\|_{\mathcal P,1}dt$. This is an $\ell_1$ control geometry in the Pauli basis, distinct from a local gate count because a high-weight Pauli direction is not additionally penalized. The equivalence between exact finite words and continuous controls is established in Appendix~\ref{sec:metric}.

\textit{\textbf{The optimal single-qubit benchmark.}}---Although the pure-state space of one qubit is only the Bloch sphere, the $\ell_1$ Pauli norm induces a nontrivial and anisotropic control geometry~\cite{BrownSusskind2019}. Represent the density operator as $(I+xX+yY+zZ)/2$, with unit Bloch vector $r=(x,y,z)$. An exact one-copy formula is essential here: without it, a two-copy protocol could only be compared with a particular independent circuit rather than with the true independent optimum.

\begin{theorem}[Exact one-copy cost]\label{thm:onecopy}
For a unit Bloch vector $r=(x,y,z)$, set $Y=\min(|x|,|y|)$ and $R=\sqrt{1-Y^2}$. With principal inverse-trigonometric branches,
\begin{equation}
 C_1(r)=\frac{\arcsin(\sqrt2Y)}{\sqrt2}
 +\frac12\left[\arccos\frac zR-\arcsin\frac YR\right].
 \label{eq:C1exact}
\end{equation}
\end{theorem}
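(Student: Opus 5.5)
The plan is to treat $C_1$ as the distance from the north pole in a Finsler geometry on the Bloch sphere, and to prove Eq.~\eqref{eq:C1exact} by a matching pair of bounds: an explicit upper-bound protocol and a calibration (Hamilton--Jacobi) lower bound. By the equivalence of exact finite words and continuous controls (Appendix~\ref{sec:metric}), it suffices to study $\dot r = 2\,u\times r$ with $u=(u_X,u_Y,u_Z)$ and cost $\int \|u\|_1\,dt$, since $e^{i\phi P}$ rotates the Bloch vector by angle $2|\phi|$ about the axis of $P$.

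\textbf{Symmetry reduction.} First I would note symmetries of the problem that fix the initial point $(0,0,1)$ and preserve the $\ell_1$ norm on generators. These are the rotations by $\pi/2$ about $Z$, which permute $X,Y$ up to sign, and the reflections $x\mapsto -x$ and $y\mapsto -y$. Under them $C_1$ is invariant, so I may assume $0\le y\le x$. In this chamber $Y=y$, $R=\sqrt{1-y^2}$, and $z$ is unrestricted.

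\textbf{Upper bound.} I would build a two-stage protocol whose costs match the two terms of Eq.~\eqref{eq:C1exact}.
\begin{enumerate}
\item A simultaneous drive with $u_X=u_Y$, i.e.\ generator $(X+Y)$. A rotation by angle $2\theta$ has $\ell_1$ cost $\sqrt2\,\theta$. Choosing the stage so that it creates the required out-of-plane component $Y$ (equivalently, moves onto the small circle $\{y=Y\}$ at the correct height) should produce the term $\arcsin(\sqrt2 Y)/\sqrt2$. The geometric identity behind $\sqrt2 Y$ is that a tilt about the diagonal axis transfers amplitude equally into the $x$ and $y$ components.
\item A single $X$ pulse, which preserves $y$ and moves along the circle $\{y=Y\}$ of radius $R$. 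Parametrizing that circle by $z/R$ and by the starting angle $\arcsin(Y/R)$ gives cost $\tfrac12[\arccos(z/R)-\arcsin(Y/R)]$.
\end{enumerate}
I would check these stage lengths by direct computation of the rotations. I would also confirm that the composition lands on $(x,y,z)$ with $x=\sqrt{R^2-z^2}$, which holds automatically on the unit sphere.

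\textbf{Lower bound by calibration.} Let $f$ denote the right-hand side of Eq.~\eqref{eq:C1exact}, with $f(0,0,1)=0$. Along any admissible path,
\begin{equation}
 \frac{d}{dt}f(r)=2\,\nabla f\cdot(u\times r)=2\,u\cdot(r\times\nabla f)\le \|u\|_1\,\bigl\|2\,r\times\nabla f\bigr\|_\infty .
\end{equation}
It therefore suffices to verify the dual-norm Hamilton--Jacobi inequality
\begin{equation}
 \max_{P\in\{X,Y,Z\}}\bigl|2\,(e_P\times r)\cdot\nabla f\bigr|\le 1
\end{equation}
wherever $f$ is differentiable. Integrating then gives $f(r)\le\int\|u\|_1\,dt$ for every trajectory reaching $r$, and hence $f\le C_1$. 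I would compute the three directional derivatives explicitly in the chamber $0\le y\le x$. I expect the $X$ and $X{+}Y$ directions to saturate the inequality along the optimal protocol, consistent with the upper-bound construction.

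\textbf{Main obstacle.} The hard part will be the lower bound in two respects. First, the algebraic verification of the inequality for the $Z$ direction and in the region where the first term dominates requires care. Second, $f$ is only Lipschitz: it has kinks on the chamber walls $|x|=|y|$, at the axes, and possibly where the principal branches switch. To handle these I would argue that $f$ is a viscosity subsolution, or more simply use the fact that $f\circ r(t)$ is absolutely continuous with derivative bounded almost everywhere by the same estimate. At kinks I would check the one-sided derivatives, whose minimum structure should only help, since $f$ behaves like a minimum of smooth branches there.
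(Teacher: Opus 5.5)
Your strategy is the paper's: fold to the chamber $x\ge y\ge 0$ using the $Z$-quarter-turn and reflection symmetries, reach the target by a diagonal drive followed by a single-axis rotation at fixed $Y$, and prove optimality by showing that every signed Pauli field changes $f$ at rate at most one, with seams handled by continuity. As written, though, both stages of your protocol carry the wrong Pauli labels. With $u_X=u_Y$ the north pole moves toward $\pm(1,-1,0)$, since $(1,1,0)\times(0,0,1)=(1,-1,0)$. That is the antidiagonal $x=-y$. To follow $(\sin\alpha/\sqrt2,\sin\alpha/\sqrt2,\cos\alpha)$ up to $\alpha=\arcsin(\sqrt2Y)$ you need equal-magnitude, opposite-sign $X$ and $Y$ coefficients. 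Also, an $X$ pulse fixes $x$, not $y$. In your chamber $0\le y\le x$ the second leg must be a $Y$ rotation, which turns the $xz$-plane angle from $\arcsin(Y/R)$ to $\arccos(z/R)$. Note that $x\ge y$ is what guarantees $\arccos(z/R)\ge\arcsin(Y/R)$, including for southern targets. Your two stages as stated are mutually consistent only for a mirrored chamber with $|x|\le|y|$. So this is a relabeling, but as written the composition does not land on $(x,y,z)$.

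The lower bound is where the actual content lies, and you leave it at ``I would compute.'' The computation closes cleanly. In the chamber write $f=\pi/4+B(y)-\tfrac12\arcsin(z/R)$ with $B(y)=A(y)-\tfrac12\arcsin(y/R)$, so that $B'(y)=q/(2R^2)$ with $q=\sqrt{1-2y^2}$. For the fields $2e_P\times r$ the slopes are $-p$, $1$, $D$ for $P=X,Y,Z$, where $p=(xy+zq)/R^2$ and $D=(xq-yz)/R^2$. These satisfy $p^2+D^2=1$, because $x^2+z^2=y^2+q^2=R^2$. So in your chamber it is the $Y$ slope that saturates everywhere, not $X$. The $X$ slope saturates only on the diagonal ($p=1$, $D=0$), which is exactly why the opposite-sign $X,Y$ drive is tight there. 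The $Z$ direction needs no separate treatment, since it is tied to $X$ by the unit-circle identity, and there is no special region where ``the first term dominates.'' Boundedness of $B'$ is also what keeps $f$ Lipschitz at the diagonal equator (the $T$ point), where $A'$ alone diverges; this point is missing from your list of potential singularities. Finally, whether the kinks are of min or max type is irrelevant. What you need is the slope bound on each closed chamber plus continuity of $f$. You then integrate piecewise along generic Pauli orbits, which meet the seams only at isolated times, and extend to exceptional starting rays by continuity.
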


The structure of the optimum follows from the symmetries of the Pauli control set. A Clifford unitary permutes signed Pauli products under conjugation, while the phase Clifford $S=\operatorname{diag}(1,i)$ fixes the blank. Powers of $S$, together with computational-basis complex conjugation, reduce the horizontal coordinates to the folded chamber $x\ge y\ge0$. A minimizing continuous path first moves along the diagonal section to $(y,y,\sqrt{1-2y^2})$ using equal-magnitude $X$ and $Y$ controls and then rotates about $Y$ to the target. These two pieces give the two terms in Eq.~\eqref{eq:C1exact}. The matching lower bound is a calibration whose variation under every Pauli pulse is at most the charged angle. The detailed proof and a family of exact finite circuits approaching the optimum can be found in Appendix~\ref{sec:C1}. Hence $C_1$ is the unrestricted exact-circuit infimum, not the cost of a restricted ansatz.

\textit{\textbf{Two copies cost less.}}---We next consider the $T$ state, $\ket T=(\ket0+e^{i\pi/4}\ket1)/\sqrt2$. This state is a canonical nonstabilizer resource in fault-tolerant quantum computation~\cite{BravyiKitaev2005,BravyiHaah2012,Litinski2019}, but here it is used only as a target in the angle-cost geometry: Clifford transformations themselves are not free. We abbreviate tensor products as $XX=X\otimes X$, and similarly for other two-qubit Paulis. Theorem~\ref{thm:onecopy} gives the exact independent benchmark $C_1(T)=\pi/(2\sqrt2)$.
\begin{figure}[t]
\[
\Qcircuit @C=5pt @R=6pt {
 \lstick{\ket0} & \multigate{1}{e^{-itYZ}} & \qw & \multigate{1}{e^{-itYZ}} & \multigate{1}{e^{i\pi XX/4}} & \rstick{\ket T}\qw\\
 \lstick{\ket0} & \ghost{e^{-itYZ}} & \gate{e^{-i\pi Y/6}} & \ghost{e^{-itYZ}} & \ghost{e^{i\pi XX/4}} & \rstick{\ket T}\qw
}
\]
\caption{The four-pulse circuit prepares two $T$ states exactly.}
\label{fig:circuit}
\end{figure}

\begin{theorem}[Collective $T$-state shortcut]\label{thm:shortcut}
For $t=\tfrac12\arctan(1/\sqrt2)$, the circuit
\begin{equation}
 U_2=e^{i\pi XX/4}e^{-itYZ}e^{-i\pi IY/6}e^{-itYZ}
 \label{eq:U2}
\end{equation}
maps $\ket{00}$ to $\ket T^{\otimes2}$ and obeys
\begin{align}
 C_\infty(T)&\le\frac{C_2(T)}2\le c_T,\nonumber\\
 c_T&=\frac12\left(\arctan\frac1{\sqrt2}+\frac{5\pi}{12}\right),\nonumber\\
 C_1(T)-C_\infty(T)&\ge\frac\pi{2\sqrt2}-c_T>0.
 \label{eq:Tgap}
\end{align}
\end{theorem}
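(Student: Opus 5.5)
The plan is to verify the circuit by a direct state-vector computation, read off its cost from the pulse angles, and then combine this with Eq.~\eqref{eq:Cinf} and Theorem~\ref{thm:onecopy}. I would organize the verification to exploit reality. The first three pulses of $U_2$ are $e^{-itYZ}$, $e^{-i\pi IY/6}$ and $e^{-itYZ}$. For each, $-iP$ is a real matrix in the computational basis, because $YZ$ and $IY$ each contain exactly one $Y$ factor. Hence $e^{-i\theta P}=\cos\theta\, I-i\sin\theta\, P$ is a real orthogonal matrix. Starting from $\ket{00}$, the state after these three pulses is therefore a real unit vector $v\in\mathbb R^4$. Instead of pushing $v$ through the last pulse, I would pull the target back. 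Since $e^{i\pi XX/4}=(I+iXX)/\sqrt2$, it suffices to show that
\[
 w=e^{-i\pi XX/4}\ket T^{\otimes 2}=\tfrac{1}{\sqrt2}(I-iXX)\tfrac12\left(\ket{00}+\omega\ket{01}+\omega\ket{10}+\omega^2\ket{11}\right),
 \qquad \omega=e^{i\pi/4},
\]
is a global phase times a real vector, and that this real vector equals $v$.

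The concrete steps are as follows. First, compute $e^{-itYZ}\ket{00}=\cos t\ket{00}+\sin t\ket{10}$, using $YZ\ket{00}=i\ket{10}$. Second, apply the real $Y$-rotation by angle $\pi/6$ to qubit two. Third, apply $e^{-itYZ}$ again, using $YZ\ket{ab}=i(-1)^{a+b}\ket{\bar a b}$ up to the sign conventions fixed by $Y=iXZ$. This gives explicit amplitudes of $v$ in terms of $\cos t$, $\sin t$ and $\cos(\pi/6),\sin(\pi/6)$. Fourth, expand $w$: the $XX$ swap $\ket{00}\leftrightarrow\ket{11}$ and $\ket{01}\leftrightarrow\ket{10}$ mixes $1$ with $\omega^2=i$ and $\omega$ with itself. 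Factoring out a common phase should leave real amplitudes. The $\ket{01}$ and $\ket{10}$ components become equal, and the $\ket{00}$ and $\ket{11}$ components have definite magnitudes.

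Matching $v$ to this real vector then reduces to trigonometric identities in $2t$. With $t=\tfrac12\arctan(1/\sqrt2)$ one has $\cos2t=\sqrt{2/3}$ and $\sin2t=1/\sqrt3$. I expect the products of $t$-terms to combine through double-angle formulas into exactly these values. I expect this bookkeeping to be the main obstacle, namely getting the signs and the global phase right. I would check it componentwise, using normalization to reduce the work to two or three independent equations.

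The cost part is then routine. The charged angles are $t$, $\pi/6$, $t$ and $\pi/4$, so $C_2(T)\le 2t+5\pi/12=\arctan(1/\sqrt2)+5\pi/12=2c_T$. Since $C_\infty(T)=\inf_n C_n(T)/n$ by Eq.~\eqref{eq:Cinf}, we get $C_\infty(T)\le C_2(T)/2\le c_T$. Theorem~\ref{thm:onecopy} gives $C_1(T)=\pi/(2\sqrt2)$, so $C_1(T)-C_\infty(T)\ge \pi/(2\sqrt2)-c_T$. Strict positivity follows from a rigorous numerical comparison: $\pi/(2\sqrt2)>1.110$, while $\arctan(1/\sqrt2)<0.616$ and $5\pi/12<1.309$, so $c_T<0.963$.
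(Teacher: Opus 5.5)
Your proposal is correct and follows the paper's proof. The paper evaluates the four pulses directly on $\ket{00}$, reaching $v_3=\ket{00}/\sqrt2+(\ket{01}+\ket{10})/2$ and pushing it through $e^{i\pi XX/4}$; your pull-back of $\ket T^{\otimes2}$ through $e^{-i\pi XX/4}$ gives exactly this real vector with trivial phase. Both then use the angle count $2t+5\pi/12$, the bound $C_\infty\le C_2/2$ and the value $C_1(T)=\pi/(2\sqrt2)$. For strict positivity the paper avoids decimals: $\arctan(1/\sqrt2)<\pi/4$ gives $c_T<\pi/3$, and $8<9$ gives $\pi/3<\pi/(2\sqrt2)$. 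Your numerical bounds are also valid.
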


The four-pulse word is therefore not merely shorter than a convenient product circuit; it beats twice the exact one-copy optimum. The protocol in Fig.~\ref{fig:circuit} passes through correlated states. For amplitudes $(a,b,c,d)$, the concurrence $2|ad-bc|$ vanishes exactly on product states~\cite{Wootters1998}. Immediately before the final pulse, the state is $\ket{00}/\sqrt2+(\ket{01}+\ket{10})/2$, whose concurrence is $1/2$, and the final $XX$ rotation converts it into the product target. The strict comparison follows from $c_T<\pi/3<\pi/(2\sqrt2)$. The detailed gate calculation and the continuity argument showing that the separation persists in a neighborhood of $T$ can be found in Appendix~\ref{sec:T}.

\begin{figure}[h]
\centering
\vspace{2pt}
\includegraphics[width=\columnwidth]{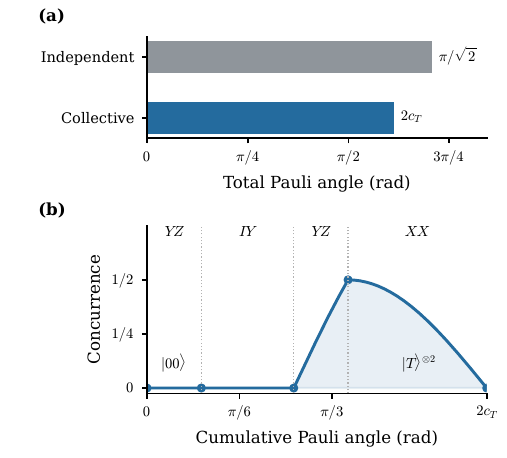}
\caption{\textbf{Collective preparation of a product target at reduced cost.}  (a) Its total angle $2c_T$ is below the optimal independent cost $2C_1(T)=\pi/\sqrt2$. (b) Concurrence along the exact circuit. Entanglement is generated transiently and vanishes at the product endpoint.}
\label{fig:shortcut}
\end{figure}

The Pauli orbit of a ray consists of its images under $I,X,Y,Z$. For $T$, the Clifford $S$ cycles all four orbit states and preserves their one-copy costs. The same collective separation therefore holds relative to the best independent preparation throughout this orbit.

\textit{\textbf{Two mechanisms for collective saving.}}---The $T$ circuit shows that correlations can mediate a shortcut, but it does not imply that correlations are necessary. The exact one-qubit formula allows us to separate two mechanisms. We call a Bloch vector northern when $z\ge0$ and southern when $z<0$.

\begin{proposition}[Collective orbit conversion]\label{prop:orbit}
For any one-qubit Pauli $Q$, $C_\infty(Q\psi)=C_\infty(\psi)$. Every southern Bloch vector, $z<0$, consequently satisfies
\begin{equation}
 C_1(r)-C_\infty(r)\ge\arcsin\frac{|z|}{\sqrt{1-Y^2}}>0.
 \label{eq:south}
\end{equation}
\end{proposition}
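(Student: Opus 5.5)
The plan is to prove the orbit invariance first, using the fact that a tensor power of a one-qubit Pauli is itself a single Pauli product, and then to read off Eq.~\eqref{eq:south} from Theorem~\ref{thm:onecopy}.

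\emph{Orbit invariance.} Fix $Q\in\{X,Y,Z\}$; the case $Q=I$ is trivial. Since $(Q^{\otimes n})^2=I$, we have $e^{i(\pi/2)Q^{\otimes n}}=iQ^{\otimes n}$. This is one admissible pulse of charge $\pi/2$, and it maps $\ket\psi^{\otimes n}$ to $(Q\ket\psi)^{\otimes n}$ up to global phase, whatever the value of $n$. Appending it to any exact word for $\ket\psi^{\otimes n}$ gives $C_n(Q\psi)\le C_n(\psi)+\pi/2$. Dividing by $n$ and using the limit in Eq.~\eqref{eq:Cinf} gives $C_\infty(Q\psi)\le C_\infty(\psi)$. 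Because $Q(Q\psi)\doteq\psi$, the same argument gives the reverse inequality, so $C_\infty(Q\psi)=C_\infty(\psi)$.

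\emph{Southern bound.} Let $r=(x,y,z)$ with $z<0$, and take $Q=X$. Conjugation by $X$ sends the Bloch vector to $r'=(x,-y,-z)$. This leaves $|x|$ and $|y|$ unchanged, so $Y$ and $R=\sqrt{1-Y^2}$ are the same for $r$ and $r'$. Combining the invariance with Eq.~\eqref{eq:Cinf} gives
\[
C_\infty(r)=C_\infty(r')\le C_1(r').
\]
Theorem~\ref{thm:onecopy} then gives
\[
C_1(r)-C_\infty(r)\ge C_1(r)-C_1(r')=\tfrac12\left[\arccos\frac{z}{R}-\arccos\frac{-z}{R}\right],
\]
because the first term and the $\arcsin(Y/R)$ term are the same for $r$ and $r'$. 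Write $a=|z|/R$. The identity $\arccos(-a)=\pi-\arccos a$ turns the bracket into $\pi-2\arccos a=2\arcsin a$, so the right-hand side equals $\arcsin(|z|/R)$. This is Eq.~\eqref{eq:south}, and it is strictly positive since $z\neq0$.

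The only point needing care is that the arguments stay inside the principal branches. Since $Y^2\le x^2+y^2=1-z^2$, we have $|z|\le R$, so $a\in[0,1]$. If $Y=1$ then $z=0$, which is excluded, so $R>0$. The remaining subtlety is checking that Theorem~\ref{thm:onecopy} applies uniformly to $r$ and $r'$ without any separate case analysis on the sign of $z$.
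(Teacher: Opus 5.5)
Your proof is correct and follows the paper's argument: the single pulse $Q^{\otimes n}$ of charge $\pi/2$ gives $|C_n(Q\psi)-C_n(\psi)|\le\pi/2$, hence invariance of $C_\infty$, and evaluating the one-copy formula at the Pauli-reflected vector yields the bound. The only difference is that the paper composes $X$ with complex conjugation to land exactly on $(x,y,-z)$, whereas you use $X$ alone and note that the formula depends on $y$ only through $|y|$; this is equally valid and slightly more economical.
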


The reason is an amortization effect. The transformation $Q^{\otimes n}$ is itself one allowed Pauli product, so converting all $n$ copies costs at most $\pi/2$, independent of $n$. Combining this operation with complex conjugation relates $r$ to its northern reflection $(x,y,-z)$. Their regularized costs coincide, while their one-copy costs differ by Eq.~\eqref{eq:south}. A bounded collective conversion cost can therefore disappear entirely at the per-copy level; the detailed derivation can be found in Appendix~\ref{sec:T}.

Northern targets reveal a different mechanism: collective interactions can reproduce a desired \emph{product-state tangent} with a smaller coefficient norm than the sum of local generators. Let $x>y>0$, $z>0$, write $R_r=xX+yY+zZ$ and $S_2=XX+YY+ZZ$, and consider
\begin{equation}
 H_X=X\otimes R_r+R_r\otimes X-\frac{R_r\otimes R_r}{x}
 +\frac{\min(y^2,z^2)}xS_2.
 \label{eq:HX}
\end{equation}
At $\psi^{\otimes2}$ this Hamiltonian generates the same instantaneous ray motion as $X_1+X_2$, because $R_r\psi=\psi$ and $S_2\psi^{\otimes2}=\psi^{\otimes2}$. Its per-copy charge is $Q_X=[1-\min(y^2,z^2)+2yz]/(2x)$, whereas the incoming local $+iX$ tangent increases $C_1$ at rate $p=(xy+z\sqrt{1-2y^2})/(1-y^2)$. Whenever $Q_X<p$, a short terminal segment of the independent optimum can be replaced by a cheaper collective flow.

A particularly simple integrated construction applies near the diagonal of the northern Bloch sphere.

\begin{proposition}[Northern saving region]\label{prop:northern}
Every northern qubit ray with $Y>2/3$ satisfies
\begin{equation}
 C_2(r)<2C_1(r).
 \label{eq:northern-saving}
\end{equation}
\end{proposition}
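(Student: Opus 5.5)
The plan is to realize the product-tangent mechanism of Eq.~\eqref{eq:HX} as an \emph{integrated} flow along a product trajectory, and to compare its cost with the exact one-copy formula of Theorem~\ref{thm:onecopy}.

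\emph{Symmetry reduction.} The Clifford phase $S$ and computational-basis complex conjugation preserve both $C_1$ and $C_2$. Applying them copywise maps $\psi^{\otimes2}$ to another tensor power, and conjugating every pulse by $S^{\otimes 2}$ preserves the charge. So I may assume $x\ge y>2/3$ and $z\ge0$. Since $x^2+y^2\le1$, this forces $z^2\le 1-2y^2<1/9$ and $x<\sqrt5/3$. In particular $\min(y^2,z^2)=z^2$ throughout, and
\[
Q_X=\frac{1-z^2+2yz}{2x}.
\]

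\emph{Construction.} Let $r(s)$, $s\in[0,\sigma]$, be the one-qubit trajectory generated by the $+iX$ flow that ends at $r(\sigma)=r$. Set $r_0=r(0)$, which is slightly rotated back about $X$. For each $s$, define $H_X(s)$ by Eq.~\eqref{eq:HX} with $R_r$ replaced by $R_{r(s)}$. At the current product state this time-dependent Hamiltonian generates exactly the same ray velocity as $X_1+X_2$, because $R_{r(s)}\psi(s)=\psi(s)$ and $S_2$ fixes $\psi(s)^{\otimes2}$.

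Consequently the evolution under $H_X(s)$ keeps the state exactly equal to $\psi(s)^{\otimes 2}$, not merely to first order. By the appendix equivalence between continuous controls and exact finite words, this gives
\[
C_2(r)\le 2C_1(r_0)+2\int_0^\sigma Q_X(r(s))\,ds ,
\]
where the first term prepares $\psi(r_0)$ independently on each copy.

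\emph{Comparison with the one-copy cost.} On the open set $x>y>0$, $z>0$, formula~\eqref{eq:C1exact} is smooth. Its derivative along the $+iX$ flow is the rate $p$ quoted in the text, so
\[
C_1(r)=C_1(r_0)+\int_0^\sigma p(r(s))\,ds .
\]
It therefore suffices to prove the strict pointwise inequality $Q_X<p$ on the region $x\ge y>2/3$, $z\ge 0$, and then choose $\sigma$ small. Continuity extends the inequality to a short segment, giving
\[
C_2(r)\le 2C_1(r)-2\int_0^\sigma (p-Q_X)\,ds<2C_1(r).
\]
The edge cases must be handled separately:
\begin{itemize}
\item For $z=0$, note that $C_1$ is still differentiable in the relevant direction, or use a one-sided derivative.
\item For the tie $x=y$, run the same argument with the roles of $X$ and $Y$ exchanged, or use the one-sided derivative from the side $x>y$. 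For this, check that the backward $X$-flow from a diagonal point enters the chamber $x>y$ at rate $p$ as a one-sided limit.
\end{itemize}

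\emph{Main obstacle.} The hard step is the elementary but delicate inequality
\[
\frac{1-z^2+2yz}{2x}<\frac{xy+z\sqrt{1-2y^2}}{1-y^2}
\]
on the whole region, and not just near the diagonal point. At $x=y=1/\sqrt2$, $z=0$ it reads $1/\sqrt2<1$, so there is ample slack there. The risk is at the corner $y\downarrow2/3$ with $z$ maximal.

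I would first clear denominators and substitute $x=\sqrt{1-y^2-z^2}$. The goal is then a two-variable polynomial inequality in $(y,z)$ on $y\in(2/3,1/\sqrt2]$, $0\le z\le\sqrt{1-2y^2}$. I would try to show it is monotone in $z$, which reduces it to checking the boundary $z=\sqrt{1-2y^2}$ (that is, $x=y$) and the boundary $z=0$. Each of these is a one-variable check, and the threshold $2/3$ is presumably where the slack becomes comfortable. If monotonicity fails, I would fall back on crude bounds: $Q_X\le(1+2yz)/(2x)$, together with $p\ge xy/(1-y^2)$ plus the $z$-term. With $y>2/3$ and $z<1/3$, these should still leave a positive margin.
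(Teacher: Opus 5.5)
Your route works and is genuinely different from the paper's. The paper never certifies off-diagonal targets directly. It integrates the diagonal pair flow $H_2(\theta)=h\otimes R_\theta+R_\theta\otimes h$, $h=(X-Y)/\sqrt2$, starting at $\theta_0=\arctan(1/\sqrt2)$, where its per-copy charge $\sin2\theta+\sqrt2\cos2\theta$ equals the one-copy rate $\sqrt2$. This gives an explicit convex saving $g(\theta)$ on the diagonal. The paper then carries that fixed deficit to every point of the fixed-$Y$ branch by appending the one-copy $-iY$ suffix independently to both outputs. You instead apply the tangent-deficit mechanism with $G=X$ at each target separately. This removes the need for branch propagation and for the diagonal flow, and shows that the $X$-axis test alone covers the region. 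The price is a two-variable inequality on the whole region instead of a one-variable convexity check, and only a local deficit from a short terminal segment rather than the explicit deficit $2g(\theta)$ transported along each branch. The two thresholds agree because on the diagonal both tests reduce to $y+\sqrt{1-2y^2}<1$, i.e.\ $y>2/3$.

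As written, however, the decisive inequality $Q_X<p$ is only a plan, so you still have to prove it. Your monotonicity idea closes it. Parametrize the folded northern points with fixed $y$ by $r_y(b)=(y\cos2b+q\sin2b,\,y,\,q\cos2b-y\sin2b)$, with $q=\sqrt{1-2y^2}$ and $0\le b\le b_{\max}(y)$. Then $p=\cos2b=:c$, and with $s=\sin2b$ one has $dx/db=2z$ and $dz/db=-2x$. Since $\min(y^2,z^2)=z^2$, the test is $F>0$ with $F=2xc-(1-z^2+2yz)$, and
\[
\partial_bF=4\bigl[zc+x(y-z-s)\bigr],\qquad z+s=qc+(1-y)s\le\sqrt{q^2+(1-y)^2}<y
\]
for $y>(\sqrt5-1)/2$. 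Hence $F$ increases with $b$ (equivalently, decreases with $z$), and its minimum is $F(y,0)=2y(1-y-q)$, which is positive exactly when $y>2/3$.

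Your fallback of crude bounds cannot work. At $x=y=2/3$, $z=1/3$ the inequality is an equality, and there $(1+2yz)/(2x)=13/12>1=p$. So any estimate that discards the $-z^2$ term fails near that corner.

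Your edge cases need no exchange of $X$ and $Y$. The backward $+iX$ flow fixes $x$, lowers $y$ and raises $z$, so for $s<\sigma$ it lies in the open region $x>y$, $z>0$, $y<1/\sqrt2$, even when the target is $T$. There $C_1$ is smooth, and continuity of $C_1$ at the endpoint gives the integral identity.
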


To see the mechanism, consider the diagonal family
$\ket{\psi_\theta}=\cos\theta\ket0+e^{i\pi/4}\sin\theta\ket1$.
Appendix~\ref{sec:diagonal-flow} constructs a two-copy Hamiltonian that
follows exactly the product tangent of the one-copy optimal path.  Its
cost per copy is $\sin2\theta+\sqrt2\cos2\theta$, and comparison with
the one-copy rate gives a saving function $g(\theta)$ satisfying
$g(\theta_0)=g'(\theta_0)=0$ and $g''(\theta)>0$ for
$\theta>\theta_0=\arctan(1/\sqrt2)$.  Since $Y=2/3$ at
$\theta=\theta_0$, every diagonal point with $Y>2/3$ has a strict
two-copy deficit.  Appending the remaining one-copy minimizing $Y$-rotation independently to both outputs preserves the same deficit along the entire fixed-$Y$ branch. The propagation argument is given in Appendix~\ref{sec:regions}, completing the proof of Proposition~\ref{prop:northern}.

The equator exhibits a second, quantitatively distinct saving region.
Here no single tangent certificate covers the whole interval; instead,
three overlapping certificates are combined.

\begin{proposition}[Equatorial saving arc]\label{prop:equatorial}
For an equatorial ray
\begin{equation}
 r_\phi=(\cos\phi,\sin\phi,0),
 \qquad
 \arcsin\frac{19}{80}\le\phi\le\frac{\pi}{4},
\end{equation}
one has
\begin{equation}
 C_2(r_\phi)<2C_1(r_\phi).
 \label{eq:equatorial-saving}
\end{equation}
\end{proposition}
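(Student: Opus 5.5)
For an equatorial target $r_\phi$ with $0<\phi\le\pi/4$ we have $Y=\sin\phi$, $R=\cos\phi$ and $z=0$. Theorem~\ref{thm:onecopy} then gives
\begin{equation}
 C_1(r_\phi)=\frac{\arcsin(\sqrt2\sin\phi)}{\sqrt2}+\frac12\Big[\frac\pi2-\phi\Big].
\end{equation}
The independent optimum first runs along the diagonal family $\psi_\theta$ up to the point with $Y=\sin\phi$, and then applies a $Y$-rotation that lowers $z$ to $0$. The plan is to build the two-copy word for $r_\phi$ the same way as the one-copy optimum, but to replace one sub-arc of this path by a cheaper collective flow. Where such a sub-arc sits at the very end, this amounts to a short terminal replacement. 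Any strictly cheaper replacement gives $C_2(r_\phi)<2C_1(r_\phi)$. We also rely on the equivalence between continuous controls and exact finite words from Appendix~\ref{sec:metric}, so that continuous-flow costs bound $C_2$ directly.

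We will use three tangent certificates and check where each one applies.

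\emph{Certificate (i), the diagonal-flow certificate.} This is the one behind Proposition~\ref{prop:northern}. It saves on the diagonal segment whenever the diagonal point passed has $Y>2/3$, i.e.\ $\sin\phi>2/3$. By the propagation argument, appending the independent $Y$-rotation keeps the deficit. This covers roughly $\arcsin(2/3)<\phi\le\pi/4$.

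\emph{Certificate (ii), the $X$-type certificate $H_X$ of Eq.~\eqref{eq:HX}.} We apply it near the equatorial endpoint, approached through northern points with $x>y>0$ and small $z>0$. As $z\to0^+$ the certificate charges $Q_X=[1-z^2+2yz]/(2x)\to 1/(2\cos\phi)$. The competing local rate is $p=(xy+z\sqrt{1-2y^2})/(1-y^2)$. If the incoming optimal tangent at the equator is a $Y$-rotation, then we must use the analogous $Y$-type certificate instead: swap the roles of $X$ and $Y$ to get $H_Y$, with charge built from the same formula, and compare it with the local $Y$-rate of $C_1$. That local rate equals $1$ in the calibration normalization, since $\partial_z C_1=-\tfrac12/\sqrt{R^2-z^2}$ along the rotation.

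\emph{Certificate (iii).} This is a variant of (ii) with the $\min(y^2,z^2)S_2$ term retuned, or equivalently an intermediate point along the path where the replacement is made. We expect it to cover the small-$\phi$ end down to $\sin\phi=19/80$.

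For each certificate we compute the strict inequality between the per-copy charge $Q$ and the local decrease rate $p$ of $C_1$ as an explicit inequality in $\phi$ (and in $z$ near $0$). We then solve for the interval of $\phi$ on which it holds strictly. The number $19/80$ should appear as the endpoint where the last certificate's inequality becomes an equality, or a rational lower bound for that endpoint. After that we check that the three intervals overlap and together cover $[\arcsin(19/80),\pi/4]$. Because we need strict inequality on a closed set, we will make each interval closed with a margin, for instance by verifying the inequalities at the overlap endpoints with rational bounds on $\arcsin$ and $\sqrt2$.

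The main obstacle is identifying the right certificate at the small-$\phi$ end and proving its inequality rigorously. Near $\phi\approx0.24$ the margins are small, and the certificate's charge has to be compared with $C_1$ rates involving $\arcsin(\sqrt2 Y)$. Our first attempt will be the $H_Y$ certificate applied during the terminal $Y$-rotation, with the $S_2$ coefficient optimized. We reduce the inequality to a polynomial inequality in $s=\sin\phi$ and check it by sign analysis on $[19/80,2/3]$. If one certificate fails across the whole range, we split the interval at a rational point and use the middle certificate there.
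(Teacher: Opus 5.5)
\textbf{There is a genuine gap: nothing in your plan covers the lower part of the arc.} At the equatorial point itself ($z=0$, $x=\cos\phi$, $y=\sin\phi$), the $X$-representative has charge $Q_X=1/(2\cos\phi)$ against the rate $p=\tan\phi$. The $Y$-representative has $Q_Y=1/(2\sin\phi)$ against rate $1$. In both cases the charge-to-rate ratio is $1/(2\sin\phi)$. So your certificate (ii), applied at the equator or in the limit $z\to0^+$, and its $Y$-analogue can certify at most $\sin\phi>1/2$. Together with (i), which is a legitimate use of Proposition~\ref{prop:northern}, you cover at most $\pi/6<\phi\le\pi/4$. The whole interval $\arcsin(19/80)\le\phi\le\pi/6$ is left to a certificate (iii) that you never specify.

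Your stated first attempt for that interval cannot work. In $H_Y$ the $S_2$ coefficient $\min(x^2,z^2)$ is already the median-rule optimum. Write $(x,z)=R(\sin a,\cos a)$ along the fixed-$y$ branch. Then $2yQ_Y-1=R^2\sin a\,(2\cos a-\sin a)$ for $a\le\pi/4$, and $2yQ_Y-1=R^2\cos a\,(2\sin a-\cos a)$ for $a\ge\pi/4$. Both are nonnegative and vanish only at the equator. Hence $Q_Y\ge 1/(2y)\ge 1$ everywhere on the branch when $y\le1/2$. Retuning $S_2$ is therefore not ``equivalent'' to moving the replacement point.

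\textbf{The missing idea} is to apply the $X$-representative $H_X$ at an \emph{interior} point $r_y(b)$ of the one-copy minimizing branch. There $z>y$, so $\min(y^2,z^2)=y^2$ and the charge drops, while the $+iX$ rate is $p=\cos2b$. You then carry the strict two-copy deficit to the equator by appending the independent $-iY$ suffix to each output, which raises $C_1$ by exactly its angle. The paper does this in two steps:
\begin{itemize}
\item At $b=\pi/8$ the test becomes $1-\min(y^2,z^2)+2yz<y+q$, proved on $1/4\le y\le 1/\sqrt3$. At $y=1/\sqrt3$ this point is already equatorial; beyond it, $Q_Y=1/(2y)<1$ at the equator takes over, and $T$ is covered separately.
\item The $b=\pi/8$ test narrowly fails at $y=19/80$. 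So a second point $b_0=\tfrac12\arctan(20/21)$, chosen so that $\cos2b_0=21/29$ is rational, handles $19/80\le y\le1/4$. Here the test is $F(y)<0$, verified by exact integer arithmetic at $y=19/80$ and a positive lower bound on $K=-F'/2$.
\end{itemize}
Your phrase ``an intermediate point along the path'' gestures at this. But without the generator, the specific points and the verified inequalities, the low end of the arc is unproved.

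\textbf{Two smaller corrections.}
\begin{itemize}
\item $\arcsin(Y/R)=\arcsin(\tan\phi)$, not $\phi$. The equatorial formula is therefore $C_1=\arcsin(\sqrt2\sin\phi)/\sqrt2+\pi/4-\tfrac12\arcsin(\tan\phi)$. Yours adds a spurious $\pi/8$ at $T$.
\item Lemma~\ref{lem:deficit} only needs $C_1$ to be differentiable along the incoming path of the chosen generator. You do not need that generator to match the incoming optimal direction.
\end{itemize}
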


On the reduced equator, $Y=\sin\phi$.  Appendix~\ref{sec:equatorial-arc}
proves an $X$-tangent deficit for $19/80\le Y\le1/4$ at a fixed
interior point of the one-copy minimizing branch and propagates that
deficit to the equator.  A second $X$-tangent estimate covers
$1/4\le Y\le1/\sqrt3$.  For $Y>1/\sqrt3$, the $Y$-tangent
representative has per-copy charge $Q_Y=1/(2Y)<1$, which gives a
strict deficit up to the diagonal endpoint $T$.  These intervals join
to give the complete arc in Proposition~\ref{prop:equatorial}; the
exact inequalities are proved in Appendix~\ref{sec:equatorial-arc}.

Both propositions therefore admit collective trajectories that remain
on the product-state manifold throughout.  The interaction terms cancel
motion normal to that manifold while retaining the desired tangent at
smaller $\ell_1$ charge.  Transient entanglement can mediate a shortcut,
as in the $T$-state circuit, but it is not necessary for nonadditivity.

\begin{figure}[t]
\centering
\includegraphics[width=\columnwidth]{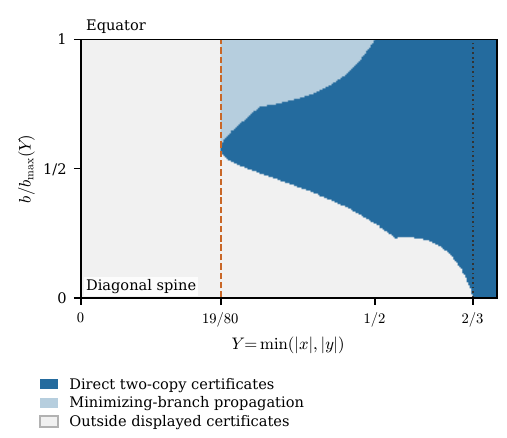}
\caption{\textbf{Certified collective-saving regions on the northern Bloch sphere.} The folded hemisphere is parameterized by $y=Y$ and the final $Y$-rotation angle $b$, normalized by its equatorial value $b_{\max}(y)$. Dark shading denotes direct tangent certificates and light shading their propagation along independently minimizing suffixes. Proposition~\ref{prop:northern} gives the region $Y>2/3$, while Proposition~\ref{prop:equatorial} gives the certified equatorial arc beginning at $Y=19/80$. Gray points are not classified by the present certificates.}
\label{fig:regions}
\end{figure}

For Fig.~\ref{fig:regions}, the one-copy minimizing branch is a $-iY$ rotation from $(y,y,\sqrt{1-2y^2})$, with $C_1=A(y)+b$, $A(y)=\arcsin(\sqrt2y)/\sqrt2$, and $b_{\max}=\pi/4-\tfrac12\arcsin[y/\sqrt{1-y^2}]$. Once a point has a strict finite-block saving, appending the one-copy optimal suffix to each output preserves the same deficit at all later points of the branch. This propagation turns a local tangent advantage into a finite target region.

\textit{\textbf{Coherence imposes an extensive floor.}}---Nonadditivity does not mean that collective preparation can make a coherent target asymptotically free. In the resource theory of coherence, deleting off-diagonal entries in the computational basis assigns a pure state with probabilities $p_x=|\langle x|\psi\rangle|^2$ the relative entropy of coherence $H(p)=-\sum_xp_x\log p_x$~\cite{Baumgratz2014,WinterYang2016,StreltsovAdessoPlenio2017}. Here the operational question is different: we ask how quickly the allowed Pauli control can create this population entropy starting from a basis state. For $\alpha\ne1$, write the R\'enyi entropy as $H_\alpha(p)=(1-\alpha)^{-1}\log\sum_xp_x^\alpha$, with natural logarithms.

\begin{theorem}[Entropy and fidelity bounds]\label{thm:entropy}
For $1/2<\alpha<1$, define
\begin{equation}
 \kappa_\alpha=\frac{2\alpha}{1-\alpha}
 \sup_{u\ge0}\frac{\sinh[(1-\alpha)u]}{\cosh(\alpha u)}.
 \label{eq:kappa}
\end{equation}
For evolving probabilities $q_x(t)=|\langle x|\varphi(t)\rangle|^2$ under a Pauli flow with $|u_P|=1$, $|\dot H_\alpha(q)|\le\kappa_\alpha$, independently of register size. Hence
\begin{equation}
 C_n(\psi)\ge n\frac{H_\alpha(p)}{\kappa_\alpha}.
 \label{eq:entropy}
\end{equation}
For a length-$L$ circuit with output $\varphi_n$, its squared fidelity $F_n=|\langle\psi^{\otimes n}|\varphi_n\rangle|^2$ obeys
\begin{equation}
 F_n\le\exp\left[-\frac{1-\alpha}{\alpha}
 \left(nH_\beta(p)-\kappa_\alpha L\right)\right],
 \quad\beta=\frac{\alpha}{2\alpha-1}.
 \label{eq:converse}
\end{equation}
\end{theorem}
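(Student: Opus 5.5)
The plan is to prove a differential inequality for the R\'enyi entropy under a single Pauli generator, integrate it along a pulse sequence, and then convert the resulting entropy bound into a fidelity bound using H\"older's inequality.

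\emph{Step 1: pair structure of a Pauli flow.} Write $P=\omega X^{a}Z^{b}$ (a phase $\omega$, bit strings $a,b$). Then $P\ket x=\omega(-1)^{b\cdot x}\ket{x\oplus a}$.
\begin{itemize}
\item If $a=0$, $P$ is diagonal and $q$ is constant.
\item If $a\ne0$, the basis splits into disjoint pairs $\{x,x\oplus a\}$, each invariant under $P$.
\end{itemize}
In a given pair, $\dot\varphi=\pm iP\varphi$ couples only the two amplitudes $\varphi_x,\varphi_{x\oplus a}$, and $P$ acts on their span as a Hermitian operator with unit off-diagonal modulus. Hence $q_x+q_{x\oplus a}$ is conserved and
\[ |\dot q_x|=2|\mathrm{Im}(\bar\varphi_x\cdots\varphi_{x\oplus a})|\le2\sqrt{q_xq_{x\oplus a}}. \]
No property of the register size enters.

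\emph{Step 2: the speed limit.} With $S_\alpha=\sum_xq_x^\alpha$, one pair contributes
\[ \dot S_\alpha\big|_{\rm pair}=\alpha\,(q_x^{\alpha-1}-q_{x\oplus a}^{\alpha-1})\,\dot q_x. \]
Parametrize the pair by $q_x=s e^{u}$ and $q_{x\oplus a}=s e^{-u}$ with $s=\sqrt{q_xq_{x\oplus a}}$.
\begin{itemize}
\item Using Step 1, the contribution is at most $4\alpha s^\alpha\sinh[(1-\alpha)|u|]$ in absolute value.
\item The pair's share of $S_\alpha$ is $2s^\alpha\cosh(\alpha u)$.
\item The ratio is therefore at most $(1-\alpha)\kappa_\alpha$, with $\kappa_\alpha$ as in Eq.~\eqref{eq:kappa}.
\end{itemize}
The supremum is finite precisely because $\alpha>1/2$: then $\sinh[(1-\alpha)u]/\cosh(\alpha u)$ decays like $e^{(1-2\alpha)u}$. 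Summing over pairs gives $|\dot S_\alpha|\le(1-\alpha)\kappa_\alpha S_\alpha$. Since $H_\alpha=(1-\alpha)^{-1}\log S_\alpha$, this yields $|\dot H_\alpha|\le\kappa_\alpha$.

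\emph{Step 3: lower bound on $C_n$.} Each pulse $e^{i\phi P}$ is the flow of $\pm P$ for time $|\phi|$, so along any exact word $H_\alpha$ changes by at most $\kappa_\alpha\sum_j|\phi_j|$. (Appendix~\ref{sec:metric} supplies the continuous-control version.) The blank has $H_\alpha=0$. The target distribution is $p^{\otimes n}$, and R\'enyi entropy is additive, so its entropy is $nH_\alpha(p)$. Taking the infimum over words gives Eq.~\eqref{eq:entropy}.

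\emph{Step 4: fidelity converse.} Let $q$ be the output distribution of a length-$L$ circuit; Step 3 gives $H_\alpha(q)\le\kappa_\alpha L$. By the triangle inequality, $\sqrt{F_n}\le\sum_x\sqrt{p^{\otimes n}_xq_x}$. Apply H\"older with exponents $2\alpha$ on $\sqrt q$ and $r=2\alpha/(2\alpha-1)$ on $\sqrt{p^{\otimes n}}$. Since $r/2=\beta$, this gives
\[ F_n\le S_\alpha(q)^{1/\alpha}\,S_\beta(p^{\otimes n})^{(2\alpha-1)/\alpha}. \]
Now substitute the following:
\begin{itemize}
\item $S_\alpha(q)=e^{(1-\alpha)H_\alpha(q)}$;
\item $S_\beta(p^{\otimes n})=e^{(1-\beta)nH_\beta(p)}$;
\item $(1-\beta)(2\alpha-1)=\alpha-1$.
\end{itemize}
The exponent becomes $\frac{1-\alpha}{\alpha}\bigl(H_\alpha(q)-nH_\beta(p)\bigr)$, and inserting $H_\alpha(q)\le\kappa_\alpha L$ gives Eq.~\eqref{eq:converse}.

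The step I expect to need the most care is Step 2. The derivative bound must be organized pairwise so that it is proportional to $S_\alpha$ itself, with no dimension factor. Getting this right depends on the $(s,u)$ parametrization and on checking that $\kappa_\alpha<\infty$ exactly when $\alpha>1/2$.
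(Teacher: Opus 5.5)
Your proposal is correct and follows the paper's proof essentially step for step. The shared steps are the pairwise current bound $|\dot q_x|\le2\sqrt{q_xq_{x\oplus a}}$, the scale-homogeneous reduction giving $|\dot S_\alpha|\le(1-\alpha)\kappa_\alpha S_\alpha$, integration from the point-mass blank with R\'enyi additivity, and the phase-deleting triangle inequality followed by H\"older with exponents $2\alpha$ and $2\alpha/(2\alpha-1)$.

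The paper adds one detail you omit: zero populations. These occur already at the blank, where your $(s,u)$ parametrization is undefined. Since $2\alpha>1$, the map $t\mapsto|\varphi_x(t)|^{2\alpha}$ is $C^1$ with zero derivative at a zero, so such pairs contribute nothing and $S_\alpha$ stays differentiable. This, together with the finiteness of the H\"older exponent, is where $\alpha>1/2$ genuinely enters. The supremum defining $\kappa_\alpha$ is in fact still finite at $\alpha=1/2$, where it gives $\kappa_{1/2}=2$, so finiteness of $\kappa_\alpha$ is not what forces $\alpha>1/2$.
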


The dimension-independent bound follows from a simple structural property of Pauli dynamics. Any nondiagonal Pauli operator partitions the computational basis into disjoint pairs, and the probability current within a pair of populations $a$ and $b$ is bounded by $2\sqrt{ab}$. Optimizing the resulting R\'enyi-entropy production over a single pair and summing over all pairs gives a universal speed limit that is independent of the register size. H\"older's inequality then turns this entropy constraint into the fidelity bound of Eq.~\eqref{eq:converse}. In particular, allowing a Pauli interaction to act collectively on all copies does not increase the maximal entropy-production rate per unit control cost. The full derivation is given in Appendix~\ref{sec:entropy}. Unlike conventional quantum speed limits, which constrain evolution through energy, variance, geometric distance, or related dynamical quantities~\cite{MargolusLevitin1998,DeffnerCampbell2017,Marvian2016,MondalDattaSazim2016,FunoShiraishiSaito2019,Bukov2019}, the present bound limits the rate at which computational-basis coherence can be generated per unit Pauli-control length.

Taking the R\'enyi parameter to the Shannon-entropy limit gives a finite threshold for the preparation cost per copy. Any family of protocols whose rate remains strictly below this threshold has exponentially vanishing fidelity in the many-copy limit. Consequently, every coherent finite-dimensional target requires a cost that grows linearly with the number of copies, even when a fixed or vanishing preparation error is allowed. Computational-basis states behave qualitatively differently: all copies of any nonzero basis state can be produced by a single collective Pauli rotation of total cost $\pi/2$, independent of the number of copies. Hence the regularized cost vanishes exactly for computational-basis rays, whereas every coherent target retains a strictly positive asymptotic cost per copy. The explicit threshold and the corresponding approximate-preparation bounds are derived in Appendix~\ref{sec:growth}.

\textit{\textbf{Discussion.}}---Our results show that the tensor-product structure of the target does not determine the tensor-product structure of an optimal preparation. This distinguishes the present effect from low-depth parallel state-preparation schemes, where the gain is measured in circuit depth or space--time allocation~\cite{Gui2024}. Here the resource itself is nonadditive: the total Pauli rotation angle needed per copy can decrease when the copies are synthesized jointly. The exact one-qubit benchmark is essential for making this statement sharp, while the $T$ shortcut, orbit conversion, and product-tangent constructions show that the advantage has more than one geometric origin.

The entropy bounds give the complementary many-copy limitation. In contrast with the one-copy Fubini--Study distance, they remain extensive under tensor powers and are insensitive to the weight of an individual Pauli generator. This identifies computational-basis coherence as an obstruction to zero asymptotic rate, while leaving room for collective reductions of the coefficient. The resulting picture parallels a common theme in quantum resource theories---regularization can reveal a different asymptotic cost from the one-shot value~\cite{ChitambarGour2019}---but the responsible resource here is the control geometry generated by weighted Pauli products rather than a conventional free-operation structure.

Several parts of the original collective-cost problem remain open. The complete saving boundary on the northern Bloch sphere and the exact value of $C_\infty$ are not determined. In particular, for $\ket+=(\ket0+\ket1)/\sqrt2$, we prove $C_2(\ket+)=\pi/2$, while the unrestricted regularized value is only bounded by $(\log2)/\kappa_1\le C_\infty(\ket+)\le\pi/4$. The derivation of these bounds is given in Appendices~\ref{sec:endpoint} and \ref{sec:deficiency}. We also obtain several endpoint obstructions. The additive floor $N\pi/4$ holds for pulses on the $(\pi/8)\mathbb Z$ angle grid, for trajectories whose pulse-boundary stabilizer deficiency never exceeds two, and for paths whose population-norm vector remains in the positive product cube $\bigotimes_j(\cos\theta_j,\sin\theta_j)$ with $0\le\theta_j\le\pi/4$. The corresponding proofs are given in Appendices~\ref{sec:grid}, \ref{sec:matching}, and \ref{sec:deficiency}, respectively. Any unrestricted shortcut to the $\ket+^{\otimes N}$ endpoint must therefore leave all of these protected regimes. Branch interference beyond them remains the central obstruction to a global additivity theorem, and its scope is discussed further in Appendix~\ref{sec:branches}.

More generally, exact all-block additivity requires a calibration that remains valid on correlated states of arbitrary register size. A dual characterization of $C_\infty(\psi)$ as the maximum of $F(\psi)$ over real tensor-additive, Pauli-1-Lipschitz calibrations can be found in Appendix~\ref{sec:dual}. Evaluating this dual for specific coherent states, determining the full saving boundary, extending the model to weight-dependent interaction penalties, and closing the fixed-error asymptotic rate are natural next steps. These questions connect state synthesis, quantum control, and many-copy resource accounting in a setting where the target remains simple but the optimal path need not.

\textit{\textbf{AI usage statement.}}---
Generative artificial-intelligence tools were used extensively in this work. A substantial majority of the initial analytical derivations, proof constructions, intermediate calculations, and manuscript text was generated under human direction. The authors formulated the research questions, provided problem-specific constraints and mathematical context, selected which generated approaches to develop, critically examined the resulting arguments, and independently checked the final claims, calculations, and proofs before inclusion in the manuscript. All AI-generated material was reviewed and revised by the authors, who take full responsibility for the correctness and content of the work.

\textbf{\textit{Acknowledgments.}}---This work was partially supported by the National Key R\&D Program of China (Grant No.~2024YFB4504004), the National Natural Science Foundation of China (Grant No.92576114), the Guangdong Provincial Quantum Science Strategic Initiative (Grant No.~GDZX2403008, GDZX2503001), and the Guangdong Provincial Key Lab of Integrated Communication, Sensing and Computation for Ubiquitous Internet of Things (Grant No.~2023B1212010007).
\bibliography{ref}

\newpage
\onecolumngrid
\vspace{2cm}
\setcounter{secnumdepth}{2}
\setcounter{section}{0}
\setcounter{subsection}{0}
\setcounter{equation}{0}
\setcounter{figure}{0}
\setcounter{table}{0}
\setcounter{theorem}{0}
\renewcommand{\thesection}{\Roman{section}}
\renewcommand{\theequation}{S\arabic{equation}}
\renewcommand{\thefigure}{S\arabic{figure}}
\renewcommand{\thetable}{S\arabic{table}}
\renewcommand{\thetheorem}{S\arabic{theorem}}
\renewcommand{\theHsection}{supp.\arabic{section}}
\renewcommand{\theHequation}{supp.\arabic{equation}}
\renewcommand{\theHfigure}{supp.\arabic{figure}}
\renewcommand{\theHtable}{supp.\arabic{table}}
\providecommand{\theHtheorem}{\arabic{theorem}}
\renewcommand{\theHtheorem}{supp.\arabic{theorem}}
\begin{center}
{\large\bfseries Supplemental Material for
``Collective Quantum State Preparation Is Nonadditive''}
\end{center}
In this Supplemental Material, we provide the detailed derivations and endpoint restrictions supporting the results of the Letter. Throughout, states are pure rays, Pauli generators are Hermitian, logarithms are natural, and a pulse is $e^{+itP}$ with charge $|t|$. A Bloch vector therefore evolves as $\dot r=-2e_P\times r$. Whenever the opposite field convention is used, it is stated explicitly. The supplemental results do not assume an unrestricted no-saving classification.

Section~\ref{sec:metric} establishes the path metric, the equivalence between continuous controls and exact finite words, and the regularized limit. Section~\ref{sec:C1} derives the exact one-copy benchmark on the full Bloch sphere. Section~\ref{sec:T} proves the collective $T$-state shortcut and orbit conversion, while Sec.~\ref{sec:regions} develops the product-tangent constructions and the certified northern saving regions. Sections~\ref{sec:entropy} and \ref{sec:growth} establish the entropy speed limit, fidelity strong converse, and exact-versus-approximate growth statements. The remaining sections develop the tensor-additive dual characterization and the endpoint obstructions used in the Discussion of the Letter.

\section{Path metric, exact finite words, and regularization}\label{sec:metric}
For fixed register size, let $d(a,b)$ denote the infimum of the total angle of finite Pauli words taking ray $a$ to ray $b$. Inversion and concatenation give symmetry and the triangle inequality. The Fubini--Study distance is $d_{\rm FS}(a,b)=\arccos|\langle a|b\rangle|$ for normalized representatives. The Fubini--Study speed of a unit Pauli flow is $\sqrt{\langle P^2\rangle-\langle P\rangle^2}\le1$, so
\begin{equation}
 d_{\rm FS}(a,b)\le d(a,b).
\end{equation}
Tensoring independent transitions gives $d(a\otimes c,b\otimes e)\le d(a,b)+d(c,e)$. Conjugating a circuit by a Clifford that fixes the blank preserves preparation cost; it does not assign zero charge to inserting that Clifford. Complex conjugation also preserves preparation cost, because $e^{itP}$ becomes $e^{-it\overline P}$ and $\overline P$ is a signed Pauli.

\begin{lemma}[Continuous and finite-word infima]\label{lem:control}
On a fixed register, integrable control $\dot U=i\sum_Pu_P(t)PU$ with length $\int\sum_P|u_P(t)|dt$ gives the same transition infimum as finite Pauli words.
\end{lemma}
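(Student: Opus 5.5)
We prove two inequalities between the finite-word transition infimum $d(a,b)$ and the continuous infimum $d_c(a,b)$, defined as the infimum of $\int\sum_P|u_P(t)|\,dt$ over integrable controls whose solution $U$ satisfies $U a\doteq b$.

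\emph{The easy direction, $d_c\le d$.} A finite word $\prod_j e^{i\phi_jP_j}$ is itself a piecewise-constant control. On consecutive intervals of length $|\phi_j|$ we set $u_{P_j}=\operatorname{sgn}\phi_j$ and all other coefficients to zero. The length of this control equals $\sum_j|\phi_j|$, and the endpoint unitary is the word itself. Hence every word is dominated by a control of the same cost.

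\emph{The substantive direction, $d\le d_c$.} Fix an integrable control $u$ with length $L$ and endpoint $U(T)$, and fix $\epsilon>0$. We proceed in three steps.

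\begin{enumerate}
\item Reparametrize time by arclength $s(t)=\int_0^t\|u\|_1$. This gives a control with $\|u(s)\|_1\le1$ almost everywhere on $[0,L]$. Approximate it in $L^1$ by a step function with $N$ equal steps, without increasing the $\ell_1$ length. The Grönwall estimate $\|U-U'\|\le\int\|H-H'\|_{\rm op}\le\int\|u-u'\|_1$ keeps the endpoint unitary within $\epsilon$.
\item On each step the Hamiltonian is a constant $H=\sum_Pu_PP$ applied for time $\delta=L/N$. Apply the first-order Lie--Trotter splitting $e^{i\delta H}\approx\prod_Pe^{i\delta u_PP}$. The charge of this product is exactly $\delta\|u\|_1$, and the local error is $O(\delta^2)$ with a constant depending only on the register dimension. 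Summed over the $N$ steps, the total error is $O(L^2/N)\to0$. This produces a word $W$ with charge at most $L$ and $\|W-U(T)\|\le2\epsilon$.
\item Exactness. The word maps $a$ to a ray $b'$ within $O(\epsilon)$ of $b$, and we need a short exact correction. Since single-qubit $X,Y$ and $Z$ rotations generate $SU(2^{m})$ together with any entangling Pauli (e.g.\ $ZZ$), the ray space is Pauli-controllable. We then show that $d(b',b)\to0$ as $b'\to b$.
\end{enumerate}

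To get the estimate in step 3, use the Lie-algebra rank condition near $b$. Choose Pauli generators $P_1,\dots,P_k$ whose tangent vectors $iP_jb$ span the tangent space of projective space at $b$, allowing nested commutator directions if needed. Consider the map $(\phi_1,\dots,\phi_k)\mapsto\prod_je^{i\phi_jP_j}b$ together with further compositions realizing the commutator directions. By the standard Chow--Rashevskii/open-mapping argument, the reachable set from $b$ within cost $\eta$ contains a neighborhood of $b$ whose size shrinks to zero with $\eta$. When the rank condition fails at first order, the neighborhood scales like $\eta^r$ for some bracket depth $r$, and a power law suffices. Inverting the resulting word (symmetry of $d$) gives an exact correction from $b'$ to $b$ of cost $o(1)$ as $\epsilon\to0$. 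Therefore $d(a,b)\le L+o(1)$ for every control of length $L$, and so $d\le d_c$.

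\emph{Main obstacle.} The hard step is the uniform small-time local controllability in step 3, i.e.\ proving that $d$ is continuous with respect to the Fubini--Study topology. The plan is to prove it via the bracket-generating property of the Pauli set, applying the quantitative Chow--Rashevskii ball-box estimate on the compact manifold of rays. Compactness makes the constants uniform. The Trotter and Grönwall parts are routine.
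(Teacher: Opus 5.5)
Your proposal is correct and follows the paper's proof. Finite words are special controls. Conversely, an $L^1$ step approximation followed by product-formula splitting gives words of no larger charge that converge in operator norm, and a small exact correction is then appended. The one difference is that you do the correction on ray space via Chow--Rashevskii, whereas the paper applies the inverse function theorem to the ordered product of all $4^m-1$ nonidentity Pauli one-parameter subgroups, which is a local chart of $SU(2^m)$ at the identity. Because these generators already span $\mathfrak{su}(2^m)$, the vectors $iPb$ span the tangent space at every ray. So no bracket directions, ball-box estimates, or compactness-based uniformity are needed, and the step you flag as the main obstacle is immediate.
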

\begin{proof}
Finite words are particular controls. Conversely, approximate an integrable control in $L^1$ by piecewise-constant controls and split each constant interval by a product formula. The resulting words converge in operator norm and their angle sums converge to the integrated coefficient norm. The ordered product map of all $4^m-1$ nonidentity Pauli one-parameter subgroups has nonsingular derivative at the identity, since these generators form a basis of $\mathfrak{su}(2^m)$. It is a local coordinate chart. The residual unitary tending to the identity can therefore be corrected exactly with angle sum tending to zero. Global phases are irrelevant. This proves equality of the exact infima, not merely an approximate-endpoint statement.
\end{proof}

Every finite pure ray has finite preparation cost, since the Pauli axes span the full unitary Lie algebra. Concatenation implies subadditivity of $C_n$. Fekete's lemma gives
\begin{equation}
 C_\infty(\psi)=\inf_n\frac{C_n(\psi)}n,\qquad
 C_n(\psi)=nC_\infty(\psi)+o(n).
 \label{eq:regularization}
\end{equation}
In particular, $C_1-C_\infty=\sup_n[C_1-C_n/n]$; a positive gap is witnessed at some finite block.

\section{Exact one-qubit cost on the full Bloch sphere}\label{sec:C1}
\begin{theorem}[One-copy formula]\label{thm:C1}
For a unit Bloch vector $r=(x,y,z)$, put $Y=\min(|x|,|y|)$ and $R=\sqrt{1-Y^2}$. Then
\begin{equation}
 C_1(r)=A(Y)+\frac12\left[\arccos(z/R)-\arcsin(Y/R)\right],
 \qquad A(Y)=\frac{\arcsin(\sqrt2Y)}{\sqrt2}.
 \label{eq:W}
\end{equation}
\end{theorem}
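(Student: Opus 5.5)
The plan is to prove the two inequalities separately: an explicit control path for the upper bound, and a calibration (Lipschitz potential) argument for the lower bound. Both rest on the symmetry reduction sketched in the Letter. Conjugation by powers of $S=\operatorname{diag}(1,i)$ rotates the Bloch vector by multiples of $\pi/2$ about $z$ and fixes the blank $\ket0$. Complex conjugation reflects $y\mapsto-y$. By the remarks in Sec.~\ref{sec:metric}, both preserve $C_1$. Together they generate the dihedral group of the square acting on $(x,y)$, so we may assume $x\ge y\ge0$, where $Y=y$ and $R=\sqrt{1-y^2}$. The right-hand side of Eq.~\eqref{eq:W} is manifestly invariant under the same group, since it depends only on $Y$ and $z$. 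So it suffices to prove the formula on the chamber $x\ge y\ge0$.

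\emph{Upper bound.} By Lemma~\ref{lem:control} it is enough to exhibit a continuous control of the claimed length.
\begin{enumerate}
\item Apply the constant control $u_X=u_Y=\pm\tfrac12$, with sign chosen for the correct orientation. Its coefficient norm is $1$. It rotates $r$ about the axis $(1,1,0)/\sqrt2$ at angular speed $\sqrt2$, using $\dot r=-2e_P\times r$.
\item Starting from $(0,0,1)$, this moves $r$ along the great circle through $(y,y,\sqrt{1-2y^2})$. That point is reached after polar angle $\arcsin(\sqrt2y)$, i.e.\ at cost $A(y)$.
\item Then switch to a pure $Y$ control, which rotates at angular speed $2$ per unit charge. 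It moves $r$ along the circle of fixed $y$ of radius $R$ in the $(x,z)$ directions. The angle in that circle from the diagonal point to the target is $\arccos(z/R)-\arcsin(y/R)$, giving cost half of it.
\end{enumerate}
Because $x\ge y$ in the chamber, this angle is nonnegative, so the second leg moves in the correct direction. This yields $C_1(r)\le W(r)$, where $W$ denotes the right-hand side of Eq.~\eqref{eq:W}.

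\emph{Lower bound (calibration).} Extend $W$ to the whole sphere by the symmetry above; $W(0,0,1)=0$. I will show that $W$ is Lipschitz and that along every unit Pauli flow, $|\tfrac{d}{dt}W(r(t))|\le1$. At points of differentiability this is the Hamilton--Jacobi-type inequality
\begin{equation}
 \max_{P\in\{X,Y,Z\}}2\,\bigl|(e_P\times r)\cdot\nabla W(r)\bigr|\le1 .
\end{equation}
This is the $\ell_\infty$ dual of the $\ell_1$ control norm, so it bounds the rate under an arbitrary control $\sum_Pu_PP$ by $\sum_P|u_P|$. Integrating along any control path from the blank to $r$ then gives $W(r)\le$ length, hence $W\le C_1$.

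\emph{Computing the inequality inside the chamber.}
\begin{itemize}
\item Use coordinates $(y,\beta)$, where $\beta$ is the angle in the fixed-$y$ circle, with $x=R\sin\beta$ and $z=R\cos\beta$. Then $W=A(y)+\tfrac12(\beta-\arcsin(y/R))$.
\item The $Y$-flow preserves $y$ and changes $\beta$ at rate $2$, so its rate is exactly $1$, saturated as expected.
\item The $X$- and $Z$-flows change $y$ at rates $\mp 2z$ and $\pm 2x$ respectively, and also change $\beta$. Their rates reduce to explicit trigonometric expressions. One must check these are bounded by $1$ using $x\ge y\ge0$; equality for the combined $X+Y$ direction should hold on the diagonal $x=y$.
\end{itemize}

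\emph{Expected main obstacle.} I expect the hardest part to be twofold.
\begin{itemize}
\item \emph{The inequality itself.} Verifying it uniformly, including the southern hemisphere where $\beta>\pi/2$, requires a concrete algebraic reduction. I would first try to write each rate as $\cos$ of an angle difference times a factor $\le1$.
\item \emph{Nonsmooth loci.} $W$ fails to be differentiable on the fold lines $|x|=|y|$ and near the poles and $x=0$ or $y=0$. Paths may run along these sets. To handle this, I would note that $W$ is the minimum of finitely many smooth branches obtained by the symmetry: $Y=\min(|x|,|y|)$ is a minimum of smooth functions away from $x=0$ and $y=0$. A minimum of functions each satisfying $|\tfrac{d}{dt}W_i|\le1$ along absolutely continuous curves inherits the same bound. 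The remaining measure-zero sets are controlled by Lipschitz continuity.
\end{itemize}
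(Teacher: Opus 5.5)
Your route is the paper's: the same reduction by $S$-conjugation and complex conjugation, the same two-leg path (diagonal meridian at cost $A(y)$, then a $Y$ rotation), and the same calibration $W$ with exactly saturated $Y$-slope. As written, though, one step fails, the decisive estimate is missing, and the seam argument is flawed.

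\emph{The first leg.} With $\dot r=-2u\times r$, the control $u_X=u_Y=\pm\tfrac12$ has axis $(1,1,0)/\sqrt2$. It moves the pole along the anti-diagonal meridian $x=-y$, never through $(y,y,q)$. You need equal-magnitude, opposite-sign coefficients, $u_X=-u_Y=\tfrac12$, with axis $(1,-1,0)/\sqrt2$.

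\emph{The missing slope estimate.} The $X$- and $Z$-slope bounds carry the whole lower bound, but you leave them as ``one must check''. Your hunch closes them exactly. Put $\beta_0=\arcsin(y/R)$, so that $y=R\sin\beta_0$ and $q=R\cos\beta_0$, and use $A'(y)=1/q$ and $\beta_0'(y)=1/(qR^2)$. Along $-2e_X\times r$ the rate of $W$ is $(xy+zq)/R^2=\cos(\beta-\beta_0)$. Along $-2e_Z\times r$ it is $-(xq-yz)/R^2=-\sin(\beta-\beta_0)$. These are the paper's $p$ and $D$ (the paper uses the opposite field sign), with $p^2+D^2=1$ identically on the sphere. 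Neither $x\ge y$ nor $z\ge0$ is used here. The inequality $x\ge y$ enters only the upper bound, to ensure $\beta\ge\beta_0$.

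\emph{The seam argument.} $W$ is not the minimum of its symmetric branches. Across the seam $|x|=|y|$ in the northern hemisphere it can be the larger branch: at $(\sqrt{0.27},0.3,0.8)$ the two branch values are about $0.438$ and $0.436$. In the south it is the smaller branch, because there $\beta-\beta_0\in(0,\pi)$ makes each branch increasing in its $Y$-argument. What you can use instead is this: a continuous function that near each point agrees with one of finitely many functions obeying the slope bound inherits the integrated bound. This applies because the identity above makes each branch obey the bound outside its own chamber.

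This still breaks down at $(\pm1/\sqrt2,\pm1/\sqrt2,0)$, where $q=0$. There the branch built from the larger horizontal coordinate is undefined on part of every neighbourhood, since $A$ and $\arcsin(Y/R)$ require $Y\le1/\sqrt2$. You need the paper's observation that $B'(y)=q/(2R^2)$ stays bounded. Hence $W$ is $C^1$ up to the seam from each closed chamber, and so locally Lipschitz.

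Finally, ``measure-zero sets are controlled by Lipschitz continuity'' is not yet an argument, because a path can run inside a seam. Two fixes work:
\begin{itemize}
\item Mollify, or use Clarke gradients: these are convex hulls of limiting gradients, so they still satisfy the convex dual bound.
\item Do as the paper does and integrate only along single-Pauli orbits. This suffices because $C_1$ is an infimum over words, and such orbits cross the seam planes at isolated times. The exceptional orbits lying inside a seam, such as the $X$-orbit in the plane $x=0$, are handled by perturbing the starting ray and using continuity of $W$.
\end{itemize}
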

\begin{proof}
Conjugation by $S=\operatorname{diag}(1,i)$ and complex conjugation reduce to $x\ge y\ge0$ without changing $z$ or cost. Write $q=\sqrt{1-2y^2}$, $R^2=1-y^2$, and
\begin{equation}
 W(r)=\frac\pi4+B(y)-\frac12\arcsin(z/R),\qquad
 B(y)=A(y)-\tfrac12\arcsin(y/R).
\end{equation}
In the open chamber $x>y>0$, $B'(y)=q/(2R^2)$ and
\begin{equation}
 W_y=\frac{xq-yz}{2R^2x},\qquad W_z=-\frac1{2x}.
\end{equation}
For the Bloch fields $2e_j\times r$ (opposite to the pulse convention), the derivatives are
\begin{equation}
 \partial_XW=-p,\quad\partial_YW=1,\quad\partial_ZW=D,\qquad
 p=\frac{xy+zq}{R^2},\quad D=\frac{xq-yz}{R^2}.
\end{equation}
Their identity $p^2+D^2=1$ follows from $x^2+z^2=y^2+q^2=R^2$. Thus every signed Pauli field has absolute slope at most one.

The folded expression is globally continuous and locally Lipschitz. Near the north pole it equals $A(y)+[\arcsin(x/R)-\arcsin(y/R)]/2$; near the south pole it equals $\pi/2+A(y)-[\arcsin(x/R)+\arcsin(y/R)]/2$. These are locally Lipschitz in the folded horizontal coordinates. At the diagonal equator the cancellation in $B'(y)$ removes the apparent singularity. A generic Pauli orbit meets seams only at isolated times, so its slope integrates to $|W(e^{itP}r)-W(r)|\le|t|$. Approximate exceptional starting rays by generic ones and use continuity to obtain the same inequality everywhere. Since $W(0,0,1)=0$, every preparation word costs at least $W$.

For the upper bound, follow
\begin{equation}
 r(\alpha)=\left(\frac{\sin\alpha}{\sqrt2},
 \frac{\sin\alpha}{\sqrt2},\cos\alpha\right),\qquad
 0\le\alpha\le\arcsin(\sqrt2y).
\end{equation}
Equal-magnitude, opposite-sign $X,Y$ coefficients generate this curve with cost $A(y)$ and endpoint $(y,y,q)$. A $-iY$ rotation then increases the $XZ$-plane Bloch angle from $a_0=\arcsin(y/R)$ to $a_1=\arccos(z/R)$, costing $(a_1-a_0)/2$. The inequality $x\ge y$ ensures $a_1\ge a_0$ even for southern targets. Lemma~\ref{lem:control} transfers this matching control length to exact finite-word infima.
\end{proof}

On a one-Pauli meridian, $Y=0$ and $C_1=\tfrac12\arccos z$. On the reduced equator, $r=(\cos\phi,\sin\phi,0)$ with $0\le\phi\le\pi/4$, the formula becomes
\begin{equation}
 C_1(\phi)=\frac{\arcsin(\sqrt2\sin\phi)}{\sqrt2}
 +\frac\pi4-\frac12\arcsin(\tan\phi).
\end{equation}
In particular, $C_1(T)=\pi/(2\sqrt2)$.

\subsection{Parameterized exact finite circuits approaching the optimum}\label{sec:finiteC1}
In the folded chamber $x\ge y\ge0$, let $\alpha=\arcsin(\sqrt2y)$ and $b=[\arccos(z/R)-\arcsin(y/R)]/2$. For any positive integer $N$, put
\begin{equation}
 s_j=\frac{\sin(j\alpha/N)}{\sqrt2},\qquad R_j=\sqrt{1-s_j^2},\qquad 0\le j\le N,
\end{equation}
and define
\begin{align}
 a_j&=\frac12\left[\arcsin\frac{s_j}{R_{j-1}}-\arcsin\frac{s_{j-1}}{R_{j-1}}\right],\\
 b_j&=\frac12\left[\arcsin\frac{s_j}{R_j}-\arcsin\frac{s_{j-1}}{R_j}\right].
\end{align}
The circuit
\begin{equation}
 U_N=e^{-ibY}(e^{-ib_NY}e^{ia_NX})\cdots(e^{-ib_1Y}e^{ia_1X})
 \label{eq:finiteC1}
\end{equation}
prepares the target ray exactly for every $N$. Its cost $L_N=b+\sum_{j=1}^N(a_j+b_j)$ converges to $C_1(r)$.

Each pulse pair first increases $y$ from $s_{j-1}$ to $s_j$ at fixed $x=s_{j-1}$, then increases $x$ to $s_j$ at fixed $y=s_j$. Thus it maps one diagonal point exactly to the next, and the final $Y$ rotation reaches $r$. For $u=s_{j-1}$ and $w=s_j$, its pair cost is
\begin{equation}
 a_j+b_j=\frac12\int_u^w\left[
 \frac1{\sqrt{1-u^2-t^2}}+\frac1{\sqrt{1-w^2-t^2}}\right]dt.
\end{equation}
Away from $y=1/\sqrt2$, the sums converge to $\int_0^y(1-2t^2)^{-1/2}dt=A(y)$. At the endpoint, the first integral is at most the corresponding increment of $A$; the second obeys
\begin{equation}
 \int_u^w\frac{dt}{\sqrt{1-w^2-t^2}}
 \le\sqrt2\int_u^w\frac{dt}{\sqrt{1-2t^2}}.
\end{equation}
To prove this inequality, fix $w\le1/\sqrt2$. Both sides vanish at $u=w$, and their difference has nonpositive derivative with respect to $u$, since $1-w^2-u^2\ge(1-2u^2)/2$. These integral bounds make the endpoint tail vanish uniformly as the mesh shrinks. Therefore $L_N\to A(y)+b=C_1(r)$ even at the $T$ endpoint. Global calibration gives $L_N\ge C_1(r)$ for every $N$. The family approaches the exact infimum without asserting finite-word attainment or monotonicity in $N$.

\section{The exact two-copy shortcut and robustness}\label{sec:T}
Put $t=\tfrac12\arctan(1/\sqrt2)$ and $U_2=e^{i\pi XX/4}e^{-itYZ}e^{-i\pi IY/6}e^{-itYZ}$. Chronological evolution from the blank gives
\begin{align}
 v_1&=\cos t\ket{00}+\sin t\ket{10},\\
 v_2&=\tfrac{\sqrt3}2\cos t\ket{00}+\tfrac12\cos t\ket{01}
       +\tfrac{\sqrt3}2\sin t\ket{10}+\tfrac12\sin t\ket{11},\\
 v_3&=\tfrac{\sqrt3}2\cos(2t)\ket{00}+\tfrac12\ket{01}
       +\tfrac{\sqrt3}2\sin(2t)\ket{10}
       =\tfrac1{\sqrt2}\ket{00}+\tfrac12(\ket{01}+\ket{10}).
\end{align}
Since $\sin(2t)=1/\sqrt3$ and $\cos(2t)=\sqrt{2/3}$, the last $XX$ pulse gives
\begin{equation}
 U_2\ket{00}=\tfrac12\left(\ket{00}+e^{i\pi/4}\ket{01}
 +e^{i\pi/4}\ket{10}+i\ket{11}\right)=\ket T^{\otimes2}.
\end{equation}
The cost is $2t+5\pi/12=2c_T$. For amplitudes $(a,b,c,d)$, concurrence is $2|ad-bc|$; hence $v_3$ has concurrence $1/2$ and the final state has zero concurrence.

For any one-qubit Pauli $Q$, applying $Q^{\otimes n}$ is a single Pauli pulse of cost at most $\pi/2$, giving $|C_n(Q\psi)-C_n(\psi)|\le\pi/2$ and $C_\infty(Q\psi)=C_\infty(\psi)$. The four rays $Q\ket T$ are the four equatorial phases $(2k+1)\pi/4$ and are cycled by $S$. Conjugating by $S^{\otimes n}$ fixes the blank and permutes signed Pauli generators, so all four costs agree at every $n$.

For $r=(x,y,z)$, complex conjugation gives $(x,-y,z)$ and the Pauli $X$ maps this to $(x,y,-z)$. Both operations preserve $C_\infty$. For $z<0$, the one-copy formula therefore gives
\begin{equation}
 C_1(r)-C_1(x,y,-z)=\arcsin\frac{|z|}{\sqrt{1-Y^2}}.
\end{equation}
Since $C_\infty(x,y,-z)\le C_1(x,y,-z)$, this proves the southern bound in Proposition~\ref{prop:orbit} of the Letter.

If $d$ is the one-qubit transition metric, appending independent transitions implies
\begin{equation}
 |C_1(\psi)-C_1(\varphi)|\le d(\psi,\varphi),\qquad
 |C_\infty(\psi)-C_\infty(\varphi)|\le d(\psi,\varphi).
\end{equation}
A round shortest rotation of Fubini--Study angle $\delta$ has coefficient $\ell_1$ length at most $\sqrt3\delta$, so Lemma~\ref{lem:control} gives $d\le\sqrt3d_{\rm FS}$. Therefore the gap is positive whenever
\begin{equation}
 d_{\rm FS}(\psi,T)<\frac{\pi/(2\sqrt2)-c_T}{2\sqrt3}.
\end{equation}

\section{Two-copy tangent constructions and certified regions}\label{sec:regions}
For the northern folded sphere, use
\begin{align}
 r_y(b)&=(y\cos2b+q\sin2b,\ y,\ q\cos2b-y\sin2b),\qquad q=\sqrt{1-2y^2},\\
 0\le b&\le b_{\max}(y)=\frac\pi4-\frac12\arcsin\frac y{\sqrt{1-y^2}},
 \qquad C_1(r_y(b))=A(y)+b.
 \label{eq:branch}
\end{align}
Along a fixed-$y$ branch, a local $-iY$ rotation increases the one-copy cost by precisely its angle. Appending that suffix on each output propagates any finite-block saving to every later point of the branch.

\begin{lemma}[Strict tangent deficit]\label{lem:deficit}
Suppose a Hermitian $n$-qubit Pauli Hamiltonian $H$ generates at $\psi^{\otimes n}$ the product-ray tangent of a one-copy generator $G$. Suppose $C_1$ is differentiable along the incoming $+iG$ path, with derivative $d>0$, and $\|H\|_{\calP,1}<nd$, where the norm sums absolute nonidentity Pauli coefficients. Then $C_n(\psi)<nC_1(\psi)$.
\end{lemma}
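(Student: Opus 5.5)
The idea is to replace a short terminal segment of an independent near-optimal preparation by a collective flow generated by $H$. Let $\psi(s)=e^{-isG}\psi$ for small $s\ge0$, the point reached by running the one-copy generator backwards. Then $e^{+isG}\psi(s)=\psi$, so the incoming $+iG$ path arrives at $\psi$. Differentiability with derivative $d>0$ gives
\begin{equation*}
 C_1(\psi(s))=C_1(\psi)-ds+o(s).
\end{equation*}
Preparing $\psi(s)^{\otimes n}$ independently costs at most $n C_1(\psi(s))=nC_1(\psi)-nds+o(s)$, using Lemma~\ref{lem:control} and subadditivity. It then remains to steer $\psi(s)^{\otimes n}$ to $\psi^{\otimes n}$ at cost $\|H\|_{\mathcal P,1}s+o(s)$. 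Since $\|H\|_{\mathcal P,1}<nd$, the total is then strictly less than $nC_1(\psi)$ for small $s$.

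For the terminal segment, I will evolve backwards from the endpoint. Set $\chi(\tau)=e^{-i\tau H}\psi^{\otimes n}$; the generated flow takes $\chi(s)$ to $\psi^{\otimes n}$ at cost $s\|H\|_{\mathcal P,1}$. The hypothesis says that $H$ and $\sum_k G_k$, with $G$ acting on copy $k$, generate the same ray tangent at $\psi^{\otimes n}$. Hence $\chi(s)$ and $\psi(s)^{\otimes n}$ agree to first order: both curves leave the same point with the same projective velocity, so their Fubini--Study distance is $O(s^2)$.

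The main obstacle is closing this $O(s^2)$ mismatch with $o(s)$ charge, since $d\ge d_{\rm FS}$ goes the wrong way. I will use the local chart from the proof of Lemma~\ref{lem:control}. The ordered product of all nonidentity Pauli one-parameter subgroups is a local diffeomorphism near the identity. Choose a unitary $V_s$ with $V_s\psi(s)^{\otimes n}=\chi(s)$ and $\|V_s-I\|=O(s^2)$. One way is a rotation in the two-dimensional span of the two vectors, after matching their global phase. Inverting the chart writes $V_s$ exactly as a finite Pauli word with angle sum $O(\|V_s-I\|)=O(s^2)$, because the inverse chart is Lipschitz near the identity on a fixed register.

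The full word is: independent preparation of $\psi(s)^{\otimes n}$, then $V_s$, then $e^{isH}$, split by a product formula into Pauli pulses. Lemma~\ref{lem:control} already shows this splitting preserves exactness and the cost $s\|H\|_{\mathcal P,1}$ up to an arbitrarily small corrector. Its total cost is
\begin{equation*}
 nC_1(\psi)-s\bigl(nd-\|H\|_{\mathcal P,1}\bigr)+o(s)+\epsilon .
\end{equation*}
Here $\epsilon$ comes from approaching the one-copy infimum, which is allowed because $C_1$ is only an infimum. Choosing $s$ small and then $\epsilon<s(nd-\|H\|_{\mathcal P,1})/2$ gives $C_n(\psi)<nC_1(\psi)$.

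The only remaining check is the sign convention for the "incoming $+iG$ path." The state must approach $\psi$ along $+iG$ with $C_1$ increasing at rate $d$, which is what makes the backwards point $\psi(s)$ cheaper by $ds$. This must be reconciled with the supplement's convention that a pulse is $e^{+itP}$.
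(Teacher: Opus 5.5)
Your proof is correct, but it handles the terminal segment differently from the paper.

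The paper never lets the collective flow leave the product curve. It uses the time-dependent control $H(s)=(e^{isG})^{\otimes n}H(e^{-isG})^{\otimes n}$. Because $e^{is\sum_jG_j}$ commutes with $\sum_jG_j$, this control reproduces the product tangent at every point $(e^{isG}\psi)^{\otimes n}$ of the incoming path, not just at $\psi^{\otimes n}$. The evolution therefore tracks the product curve exactly, and the only error term is the $o(\eta)$ coming from continuity of $s\mapsto\|H(s)\|_{\mathcal P,1}$.

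You instead freeze $H$ and accept an $O(s^2)$ ray mismatch between $e^{-isH}\psi^{\otimes n}$ and $\psi(s)^{\otimes n}$. You then close it exactly with the Lipschitz inverse of the local Pauli chart from Lemma~\ref{lem:control}, at angle cost $O(s^2)$. That step is sound: the phase-matched unit vectors differ by $O(s^2)$, the span rotation $V_s$ can be normalized into $SU(2^n)$ with $\|V_s-I\|=O(s^2)$, and for small $s$ it lies in the chart's neighborhood.

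What each route buys:
\begin{itemize}
\item The paper's route needs no correction word, and it shows the collective segment stays on the product manifold. That is what the Letter relies on when it says the saving trajectories remain product throughout. It also matches the paper's ``integrate a continuously chosen tangent representative'' generalization.
\item Your route uses only the tangent condition at the single endpoint $\psi^{\otimes n}$, plus the chart. It therefore applies essentially verbatim to any differentiable incoming product curve with the right endpoint tangent, without building a tangent representative along the curve. The price is a short excursion through slightly non-product states.
\end{itemize}

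Your sign bookkeeping is also right. The point $\psi(s)=e^{-isG}\psi$ is carried to $\psi$ by the $+iG$ flow, consistent with $\dot\varphi=iH\varphi$ and pulses $e^{+itP}$.
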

\begin{proof}
For a Pauli $G$, conjugating $H$ by $(e^{isG})^{\otimes n}$ follows the incoming product path exactly, since it commutes with $\sum_jG_j$. Its coefficient norm is continuous. Independently prepare the ray at $s=-\eta$ and append this collective path. The cost is at most $nC_1(\psi)-n\eta d+\eta\|H\|_{\calP,1}+o(\eta)$, strictly below $nC_1(\psi)$ for small positive $\eta$. More generally, for a specified smooth product curve with a continuously chosen tangent representative, integrate that representative over a short terminal segment. Lemma~\ref{lem:control} and near-optimal exact input preparations preserve the strict margin. The statement concerns an infimum; it does not assume a minimum finite preparation word.
\end{proof}

\subsection{Axis tangents}
At $r=(x,y,z)$ with $x>y>0$, set $R_r=xX+yY+zZ$ and $S_2=XX+YY+ZZ$. Since $R_r\psi=\psi$ and $S_2=2\operatorname{SWAP}-I$, where $\operatorname{SWAP}$ exchanges the two tensor factors, fixes $\psi^{\otimes2}$, the Hamiltonian in the Letter satisfies
\begin{equation}
 H_X\psi^{\otimes2}=(X_1+X_2)\psi^{\otimes2}
 +\frac{m_X-1}{x}\psi^{\otimes2},\qquad m_X=\min(y^2,z^2).
\end{equation}
Its nonzero coefficients are
\begin{equation}
 XX:x+m_X/x,\quad YY:(m_X-y^2)/x,\quad ZZ:(m_X-z^2)/x,
 \quad YZ=ZY:-yz/x.
\end{equation}
Thus $\|H_X\|_{\calP,1}=2Q_X=(1-m_X+2yz)/x$. For the pulse-sign $+iX$ tangent, $\dot C_1=p=\cos2b$. Lemma~\ref{lem:deficit} proves saving when $Q_X<p$.

Likewise $H_Y=Y\otimes R_r+R_r\otimes Y-R_r\otimes R_r/y+(m_Y/y)S_2$, $m_Y=\min(x^2,z^2)$, has norm $2Q_Y=(1-m_Y+2xz)/y$ and generates the $Y$ product tangent up to phase. Use $-H_Y$ for the increasing-cost $-iY$ direction. The condition $Q_Y<1$ therefore proves saving. These are feasible Hamiltonian costs, not assertions of optimality over all representatives.

\subsection{A transverse product curve}
For $x>y>0,z>0,q>0$, let $R^2=1-y^2$ and define
\begin{equation}
 c=\frac{z-qxy}{2qzR^2},\quad u=-\frac{R^2}{2xz},\quad
 v=\frac{y^2-z^2}{2xz},\quad w=\frac y{2x}.
\end{equation}
The Hamiltonian $H_A=c(YI+IY)+uXX+vYY+w(YZ+ZY)$, with Schr\"odinger sign $-iH_A$, follows $r_y(b)^{\otimes2}$ as $y$ varies at fixed $b$. Indeed, its effective local field is $(ux,c+vy+wz,wy)$ and twice its cross product with $r$ equals
\begin{equation}
 \partial_y r_y(b)=\left(\frac{z-qxy}{qR^2},\ 1,
 -\frac{yz+x/q}{R^2}\right).
\end{equation}
Its symmetric coupling matrix $J$ obeys
\begin{equation}
 J+\frac z{2x}I=ra^T+ar^T,\qquad
 a=\left(-\frac1{4z},\frac y{4xz},\frac1{4x}\right).
\end{equation}
For a complex tangent vector $m$ with $m\cdot r=m\cdot m=0$, this gives $m^TJm=0$ and removes the double-excitation component. The tangent is consequently a product-ray tangent, not just the correct marginal velocity. Adding a multiple of $S_2$ shifts only the phase. Minimizing $|u+\lambda|+|v+\lambda|+|\lambda|$ by the median rule gives per-copy cost
\begin{equation}
 P_A=|c|+\frac y{2x}+\frac{1-\min(y^2,z^2)}{4xz}.
\end{equation}
Since $\partial_yC_1=1/q$ at fixed $b$, every strict $P_A<1/q$ is a two-copy saving certificate. This is the third pointwise test in Fig.~2 of the Letter.

\subsection{Diagonal pair flow}\label{sec:diagonal-flow}
Let $\psi_\theta=\cos\theta\ket0+e^{i\pi/4}\sin\theta\ket1$ with $0\le\theta\le\pi/4$. Its Bloch vector is $(\rho/\sqrt2,\rho/\sqrt2,z)$, where $\rho=\sin2\theta,z=\cos2\theta$, and $C_1=\sqrt2\theta$. For $h=(X-Y)/\sqrt2$, $\partial_\theta\psi_\theta=ih\psi_\theta$. The Hamiltonian
\begin{equation}
 H_2(\theta)=h\otimes R_\theta+R_\theta\otimes h
 =\rho(XX-YY)+\frac z{\sqrt2}(XZ+ZX-YZ-ZY)
\end{equation}
generates this product tangent and has per-copy cost $\rho+\sqrt2z$. Start at $\theta_0=\arctan(1/\sqrt2)$, preparing two copies independently, and then integrate this control. The saving per copy is at least
\begin{equation}
 g(\theta)=\sqrt2(\theta-\theta_0)
 +\frac{\cos2\theta-\sqrt2\sin2\theta+1}{2}.
\end{equation}
Here $g(\theta_0)=g'(\theta_0)=0$ and $g''(\theta)=2(\sqrt2\sin2\theta-\cos2\theta)>0$ on the interval. Hence every $\theta>\theta_0$, equivalently $Y>2/3$, saves. Branch propagation covers every northern target with that $Y$, including the diagonal seam itself.

\subsection{The rigorous rational band and equatorial arc}\label{sec:equatorial-arc}
Set $b_0=\tfrac12\arctan(20/21)$, $c_0=21/29$, $s_0=20/29$. On $19/80\le y\le1/4$, the point $r_y(b_0)$ has $x>y$, $z>y$ and $p=c_0$. The $X$-tangent test is $F(y)<0$, where
\begin{equation}
 F(y)=1-(1+2s_0)y^2-2c_0^2y+2c_0q(y-s_0).
\end{equation}
At $y=19/80$, $29^2 80^2F=3319399-44058\sqrt{5678}<0$, because
\begin{equation}
 44058^2\,5678-3319399^2=3197891591>0.
\end{equation}
Throughout the interval, $2/3<s_0<7/10$, $7/10<c_0<3/4$, and $q>9/10$. These give $x>3/5>y$ and $z>91/200>y$. Also $F'=-2K$ with
\begin{equation}
 K=(1+2s_0)y+c_0^2-c_0q+\frac{2c_0y(y-s_0)}q
 >\frac{133}{240}-\frac14-\frac{37}{192}=\frac{107}{960}>0.
\end{equation}
Thus the whole slice saves, and every later point $b\ge b_0$ saves by propagation.

For completeness, the next interval $1/4\le y\le1/\sqrt3$ is certified at $b=\pi/8$. There $x=(y+q)/\sqrt2,z=(q-y)/\sqrt2,p=1/\sqrt2$ and the test is
\begin{equation}
 1-\min(y^2,z^2)+2yz<y+q.
\end{equation}
If the minimum is $y^2$, the difference is $F_1=1-(1+\sqrt2)y^2+\sqrt2yq-y-q$. It is negative at $1/4$, and $F_1'=-L+R_1/q<0$ because
\begin{align}
 L&=1+2(1+\sqrt2)y,\qquad R_1=\sqrt2(1-4y^2)+2y>0,\\
 L^2q^2-R_1^2&=-1+4y+(22+8\sqrt2)y^2+8(\sqrt2-1)y^3-(56+16\sqrt2)y^4>0.
\end{align}
The first pair is nonnegative; the quadratic/quartic pair is positive using $y^2\le1/3$, and the cubic term is positive. If the minimum is $z^2$, put $t=y/q\in[0,1]$ and compare $A_1=1/2+(1+\sqrt2)t+(3/2-\sqrt2)t^2$ with $(1+t)\sqrt{1+2t^2}$. The positive square difference is bounded below by
\begin{equation}
 p_0(t)=\frac34-\frac t2-3t^2+\frac72t^3>0.
\end{equation}
For $t\le1/2$, $p_0$ decreases to $3/16$; for $t=1/2+u$, it is at least $(9/4)(u-7/36)^2+59/576$. This proves both cases. For $y>1/\sqrt3$ on the equator, $Q_Y=1/(2y)<1$ until the diagonal $T$ point, which is covered separately. Combining these intervals proves saving on the complete reduced equatorial arc $\arcsin(19/80)\le\phi\le\pi/4$. No assertion is made about its complement.

\section{Entropy production and the fidelity strong converse}\label{sec:entropy}
For a non-diagonal Pauli with flip mask $f\ne0$, pairs $\{x,x+f\}$ evolve with opposite probability currents. If their probabilities are $a,b$, then $|\dot a|\le2\sqrt{ab}$. A diagonal Pauli preserves all probabilities. For $S_\alpha(q)=\sum_xq_x^\alpha$, a pair contributes at most $2\alpha\sqrt{ab}|a^{\alpha-1}-b^{\alpha-1}|$ to $|\dot S_\alpha|$. Divide by $a^\alpha+b^\alpha$ and put $a/b=e^{-2u}$. Scale homogeneity gives
\begin{equation}
 \frac{2\sqrt{ab}|a^{\alpha-1}-b^{\alpha-1}|}{a^\alpha+b^\alpha}
 =\frac{2\sinh[(1-\alpha)|u|]}{\cosh(\alpha u)}.
\end{equation}
For $1/2<\alpha<1$, this ratio has a finite supremum. At a zero amplitude the current contribution is interpreted by continuity; $|a_x(t)|^{2\alpha}$ is absolutely continuous and differentiable at a zero because $2\alpha>1$. Summing pair bounds gives
\begin{equation}
 |\dot S_\alpha|\le(1-\alpha)\kappa_\alpha S_\alpha,
 \qquad |\dot H_\alpha|\le\kappa_\alpha.
\end{equation}
Integrating from a point-mass input and using entropy additivity proves $C_n\ge nH_\alpha(p)/\kappa_\alpha$.

The endpoint $\alpha=1/2$ can also be handled directly: $\|e^{itP}\varphi\|_1\le(|\cos t|+|\sin t|)\|\varphi\|_1\le e^{|t|}\|\varphi\|_1$. Thus $\log\|\varphi\|_1$ is a tensor-additive Pauli-1-Lipschitz lower bound. Its value on a tensor power is $n\log\sum_x|\psi_x|$.

\subsection{Strict improvement over the amplitude rate}
Let $\alpha=1/2+\delta$, $\delta\downarrow0$. At the maximizing $u_\delta$ of the speed ratio, stationarity gives
\begin{equation}
 (\tfrac12-\delta)\coth[(\tfrac12-\delta)u_\delta]
 =(\tfrac12+\delta)\tanh[(\tfrac12+\delta)u_\delta].
\end{equation}
Writing $v_\delta=e^{-u_\delta}$ makes this
\begin{equation}
 v_\delta\cosh\left(2\delta\log\frac1{v_\delta}\right)=\delta(1+v_\delta^2).
\end{equation}
The maximizer diverges as $\delta\to0$, and this identity gives $v_\delta=\delta(1+o(1))$, $u_\delta=\log(1/\delta)+o(1)$. Substitution yields
\begin{equation}
 \kappa_{1/2+\delta}=2-4\delta\log(1/\delta)+O(\delta).
\end{equation}
For a fixed finite distribution, $H_{1/2+\delta}(p)=H_{1/2}(p)+O(\delta)$, whence
\begin{equation}
 \frac{H_{1/2+\delta}(p)}{\kappa_{1/2+\delta}}
 =\frac12H_{1/2}(p)+H_{1/2}(p)\delta\log(1/\delta)+O(\delta).
\end{equation}
For every coherent target $H_{1/2}(p)>0$, so a small positive $\delta$ gives $C_\infty(\psi)>\log\sum_x|\psi_x|$.

\subsection{Strong converse}
Let $q_x=|\varphi_x|^2$ for an output of length $L$, and $r=p^{\otimes n}$. The entropy law gives $S_\alpha(q)\le e^{(1-\alpha)\kappa_\alpha L}$. The phase-deleting triangle inequality and H\"older's inequality, with exponents $2\alpha$ and $2\alpha/(2\alpha-1)$, give
\begin{align}
 \sqrt{F_n}&\le\sum_x\sqrt{q_xr_x},\\
 F_n&\le S_\alpha(q)^{1/\alpha}
 \left(\sum_xr_x^{\beta}\right)^{(2\alpha-1)/\alpha}
 \le\exp\left[-\frac{1-\alpha}{\alpha}
 (nH_\beta(p)-\kappa_\alpha L)\right],\qquad
 \beta=\frac{\alpha}{2\alpha-1}.
 \label{eq:strong}
\end{align}
As $\alpha\uparrow1$, $\kappa_\alpha\to\kappa_1=2u_*/\cosh u_*$ with $u_*\tanh u_*=1$, while $H_\beta(p)\to H(p)$. For a rate strictly below $H(p)/\kappa_1$, fix $\alpha<1$ sufficiently close to one to obtain a positive exponent. At $\alpha=2/3$, $\kappa_{2/3}=\sqrt2$ and
\begin{equation}
 F_n\le\exp[(\sqrt2L-nH_2(p))/2].
\end{equation}

\begin{figure}[ht]
\centering\includegraphics[width=.55\textwidth]{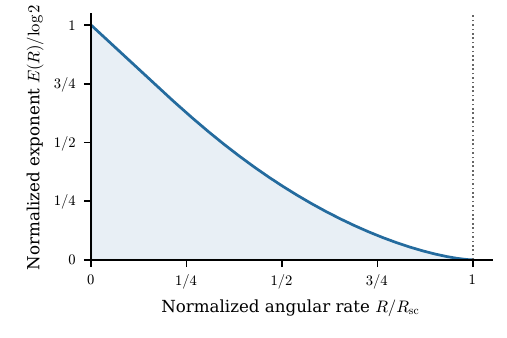}
\caption{Numerical evaluation of the analytic converse exponent for a uniform qubit target, including $T$. The curve evaluates $E(R)=\sup_{1/2<\alpha<1}[(1-\alpha)(\log2-\kappa_\alpha R)/\alpha]_+$ on a finite numerical grid, with axes normalized by $R_{\rm sc}$ and $\log2$. The theorem holds at every $\alpha$ independently of this illustration. The zero at $R_{\rm sc}=\log2/\kappa_1$ is a converse threshold, not an achieved preparation rate.}
\label{fig:converse}
\end{figure}

\section{Exact and approximate growth, including basis targets}\label{sec:growth}
Use trace distance $D(\varphi,\psi)=\tfrac12\|\proj\varphi-\proj\psi\|_1=\sqrt{1-|\langle\varphi|\psi\rangle|^2}$ and define $C_n^\eps$ accordingly. Equation~\eqref{eq:strong} gives
\begin{equation}
 C_n^\eps(\psi)\ge\frac{nH_\beta(p)}{\kappa_\alpha}
 +\frac{\alpha\log(1-\eps^2)}{(1-\alpha)\kappa_\alpha}.
\end{equation}
Therefore any sequence with $\log(1-\eps_n^2)=o(n)$ obeys $\liminf C_n^{\eps_n}/n\ge H(p)/\kappa_1$. A coherent target has $H(p)>0$, while $C_n^{\eps_n}\le nC_1$, so exact, fixed-error, and vanishing-error costs are $\Theta(n)$.

For a nonzero computational label $x$, a single Pauli flip maps the blank to $\ket x^{\otimes n}$ at angle $\pi/2$. Orthogonality and the Fubini--Study bound give
\begin{equation}
 C_n(\ket x)=\pi/2,\qquad C_n^\eps(\ket x)=\arccos\eps\quad(0\le\eps\le1).
\end{equation}
For the approximate upper bound, rotate in the blank/target plane to $\eps\ket{0^{mn}}+\sqrt{1-\eps^2}\ket x^{\otimes n}$, with an irrelevant Pauli phase on the second amplitude. Conversely an output within error $\eps$ of the orthogonal target must lie at Fubini--Study distance at least $\arccos\eps$ from the blank. The blank target itself costs zero. Combined with the entropy lower bound, this proves $C_\infty=0$ iff the target is a computational-basis ray.

If $a=|\langle0^m|\psi\rangle|$, then
\begin{equation}
 C_n^\eps(\psi)=0\quad\Longleftrightarrow\quad
 \eps\ge\sqrt{1-a^{2n}}.
\end{equation}
The reverse direction holds even for an infimum: vanishing cost implies vanishing Fubini--Study distance from the blank. Approximate blocks compose with error
\begin{equation}
 \eps_{n+k}=\sqrt{1-(1-\eps_n^2)(1-\eps_k^2)}.
\end{equation}
Hence fixed-error subadditivity does not follow from exact subadditivity. No fixed-error rate limit or equality with $H(p)/\kappa_1$ is asserted.

\section{Returned catalysts and the attained additive dual}\label{sec:dual}
A returned catalyst is an ancillary pure state which starts uncorrelated with the blank outputs and is restored exactly, uncorrelated with the prepared target. Let $\mathcal A$ be the class of real functions $F$ on all finite-register pure rays, invariant under wire relabeling and uncorrelated blank padding, with
\begin{equation}
 F(0)=0,\qquad F(a\otimes b)=F(a)+F(b),\qquad
 |F(e^{itP}a)-F(a)|\le |t|.
\end{equation}
The middle property is tensor additivity; the last is the Pauli-1-Lipschitz property. Both must hold on correlated as well as product rays. The dual value below depends on the target through evaluation at $\psi$, while the admissible class $\mathcal A$ is common to all targets.
\begin{lemma}[Fixed catalyst elimination]\label{lem:catalyst}
If a finite pure catalyst is returned exactly by a cost-$L$ protocol preparing $k$ target copies, then $C_\infty(\psi)\le L/k$.
\end{lemma}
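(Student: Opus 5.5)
The plan is to use the catalyst repeatedly rather than paying for it once per block. Let $\chi$ be the catalyst on $c$ qubits, and suppose the protocol $V$ of cost $L$ satisfies $V(\chi\otimes\ket{0^{mk}})\doteq\chi\otimes\ket\psi^{\otimes k}$. The aim is an upper bound on $C_{Nk}(\psi)$ of the form $NL+O(1)$, with the $O(1)$ term independent of $N$. The obstruction is that $C_n$ is defined on the output register alone, with no workspace. So the catalyst cannot live on extra qubits; it has to be borrowed from output wires that are still blank and then turned into target copies at the end.

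\textbf{Construction.} Take $n=Nk+M$ output copies, where the $mM$ qubits of the last $M$ copies are at least $c$ qubits and serve as catalyst space. Let $c_0$ be the finite cost of preparing $\chi$ from blank on $c$ qubits; finiteness follows from Sec.~\ref{sec:metric}, since every finite ray has finite cost. The steps are:
\begin{enumerate}
\item Prepare $\chi$ on the first $c$ qubits of the reserved block, at cost $c_0$.
\item Apply $V$ successively to each of the $N$ blocks of $k$ blank target copies, relabeling wires so that $V$ always acts on the catalyst wires together with the current block, at cost $NL$.
\item Since the catalyst is restored exactly and uncorrelated after every use, the global ray is $\chi\otimes\ket\psi^{\otimes Nk}\otimes\ket{0}$ padding. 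Undo step 1 by applying the reversed and inverted word, again at cost $c_0$, which returns the reserved block to blank.
\item Prepare the $M$ reserved copies independently at cost $MC_1(\psi)+o(1)$.
\end{enumerate}
Global phases are irrelevant throughout, and Lemma~\ref{lem:control} lets us work with infima instead of attained words, paying an arbitrarily small additional slack.

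\textbf{Conclusion.} Combining the steps,
\[
C_{Nk+M}(\psi)\le NL+2c_0+MC_1(\psi).
\]
Dividing by $Nk+M$ and letting $N\to\infty$ at fixed $M$, the regularization identity \eqref{eq:regularization} (the limit exists by Fekete) gives $C_\infty(\psi)\le L/k$.

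\textbf{Main obstacle.} The delicate step is the bookkeeping: one has to check that uncorrelated blank padding and wire relabeling are legitimate within the definition \eqref{eq:Cn}. Pauli pulses on a subset of wires extend to Pauli products by tensoring with identities, at the same charge, so this holds. One also has to check that exact restoration of the catalyst (up to phase) is preserved under tensoring with the other registers.
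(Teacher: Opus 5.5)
Your proof is correct and follows essentially the same route as the paper: borrow a fixed block of $M$ output copies (the paper's $r$) as catalyst space, prepare the catalyst once, reuse it on $N$ blocks of $k$ copies, undo it, fill the reserved block with target copies, and let $N\to\infty$ with the overhead held fixed. Your explicit $2c_0$ for preparing and un-preparing the catalyst is slightly more careful than the paper's displayed bound $A+jL+B$, which does not list the undo cost separately, but either constant overhead vanishes in the limit.
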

\begin{proof}
For an $m$-qubit target, pad the catalyst register to $rm$ qubits. Prepare this buffer at finite cost $A$, reuse it on $j$ groups of $k$ blank target blocks, undo its preparation, then prepare $r$ target copies in the buffer at finite cost $B$. This gives an ordinary protocol on $(jk+r)m$ qubits with $C_{jk+r}(\psi)\le A+jL+B$. Divide by $jk+r$ and let $j\to\infty$. The catalyst is fixed before this limit; it is not free workspace in the original model.
\end{proof}

Let $\mathcal M$ be the commutative monoid of finite-register pure rays modulo wire relabeling and uncorrelated blank padding: tensor product is its operation and the blank is its identity. Let $d_0(a,b)$ permit common blank padding returned blank, and define
\begin{equation}
 D(a,b)=\inf_c d_0(a\otimes c,b\otimes c).
\end{equation}
Combined catalysts give the triangle inequality. Also $D(a\otimes c,b\otimes c)=D(a,b)$: tensoring a spectator proves one inequality, and absorbing that spectator into the catalyst proves the other. Form the Grothendieck group of formal tensor-state differences $[a]-[b]$. Its equivalence relation is
\begin{equation}
 (a,b)\sim(a',b')\quad\Longleftrightarrow\quad
 a\otimes b'\otimes c=a'\otimes b\otimes c
 \text{ for some }c\in\mathcal M.
\end{equation}
Translation invariance makes $N([a]-[b])=D(a,b)$ well defined. It is a symmetric subadditive length on the group. Its stable seminorm is
\begin{equation}
 p(g)=\lim_n N(ng)/n=\inf_nN(ng)/n.
\end{equation}
It is integer homogeneous. Lemma~\ref{lem:catalyst} and the empty-catalyst upper bound give $p([\psi]-[0])=C_\infty(\psi)$.

Rationalize the group, quotient the zero seminorm, and complete the resulting rational normed space to a real Banach space. Hahn--Banach supplies a real linear functional of norm at most one attaining the norm of the image of $[\psi]-[0]$. Its pullback $F$ satisfies
\begin{equation}
 F(0)=0,\quad F(a\otimes b)=F(a)+F(b),\quad
 |F(e^{itP}a)-F(a)|\le p([e^{itP}a]-[a])\le|t|.
\end{equation}
Conversely every such $F$ satisfies $nF(\psi)\le C_n(\psi)$ by telescoping. Thus
\begin{equation}
 C_\infty(\psi)=\max_{F\in\mathcal A}F(\psi).
\end{equation}
If $C_\infty=C_1$, an attaining $F$ certifies equality at every block. If any $F$ attains $C_1$, independent preparation and the dual lower bound force $C_n=nC_1$ for all $n$. This existence theorem does not provide an efficiently evaluable optimal certificate.

\subsection{Hereditary saturation and finite-block approximation}
Let $\xi$ precede $\eta$ on a minimizing one-copy path with suffix length $\ell=C_1(\eta)-C_1(\xi)$. Independent suffixes give
\begin{equation}
 C_n(\eta)/n\le C_n(\xi)/n+\ell.
\end{equation}
Hence every defect $\Delta_n=C_1-C_n/n$, and its regularized limit $\Delta=C_1-C_\infty$, is nondecreasing along that path. If $\eta$ has no saving, the same attaining additive $F$ saturates every minimizing prefix $\xi$: $F(\xi)\ge F(\eta)-\ell=C_1(\xi)$, and the reverse inequality always holds. It also certifies additivity of arbitrary heterogeneous products of those prefixes.

On qubit rays, $|C_n(\psi)/n-C_n(\varphi)/n|\le\sqrt3d_{\rm FS}(\psi,\varphi)$ uniformly in $n$. Along factorial blocks $N_k=k!$, $C_{N_k}/N_k$ decreases to $C_\infty$. Continuity and compactness give uniform convergence by Dini's theorem. Every compact subset of the saving set therefore has one common finite-block witness. This supplies no universal witness size at arbitrarily small gap and no evaluated coordinate frontier.

\section{Meridian threshold and stabilizer rates}\label{sec:endpoint}
For $\psi_\theta=\cos\theta\ket0+\sin\theta\ket1$, $0\le\theta\le\pi/2$, $C_1=\theta$. Let $f(\theta)=C_\infty(\psi_\theta)$. Independent angle increments give $|f(\theta)-f(\phi)|\le|\theta-\phi|$. Thus $\theta-f(\theta)$ is continuous, nonnegative, and nondecreasing, with zero set $[0,\theta_c]$. Pauli reflection implies $f(\theta)=f(\pi/2-\theta)$, so $\theta_c\le\pi/4$. The endpoint identity $f(\pi/4)=\pi/4$ is equivalent to $\theta_c=\pi/4$, and then Lipschitzness forces
\begin{equation}
 f(\theta)=\min(\theta,\pi/2-\theta).
\end{equation}
No value of $\theta_c$ is evaluated here. Entropy and independent preparation give
\begin{equation}
 \frac{\log2}{\kappa_1}
 \le c_+:=C_\infty(\ket+)\le\pi/4.
\end{equation}

Every pure stabilizer state $s$ has flat magnitudes on an affine binary support of dimension $r(s)=\log_2|\supp s|$. A Pauli translation removes its affine shift. A blank-fixing binary-linear Clifford brings its support to the first $r$ coordinates, and phase Cliffords remove the linear fourth-root phases and quadratic signs from the stabilizer recurrences. These conjugation symmetries, together with Pauli rate invariance, reduce the state to $\ket+^{\otimes r}\otimes\ket0^{\otimes(m-r)}$. Blank factors do not raise cost, and Lemma~\ref{lem:catalyst} removes any asymptotic benefit from those returned blanks. Consequently
\begin{equation}
 C_\infty(s)=r(s)c_+.
\end{equation}
This is a reduction to one unknown constant, not a determination of all stabilizer costs.

\section{Stabilizer geometry and angle-grid bounds}\label{sec:grid}
For an $M$-qubit ray $v$, let $S_v\subset\mathbb F_2^{2M}$ consist of the Pauli labels for which $v$ is an eigenvector. Eigenvalue signs are omitted. It is isotropic: two anticommuting Paulis cannot share an eigenvector. For binary subspaces define
\begin{equation}
 d_G(A,B)=\dim A+\dim B-2\dim(A\cap B).
\end{equation}
The intersection inequality gives its triangle inequality.

For $U=e^{i\beta P_p}$, the commuting part is unchanged:
\begin{equation}
 S_v\cap p^\perp=S_{Uv}\cap p^\perp.
 \label{eq:common}
\end{equation}
If $\sin2\beta\cos2\beta\ne0$, the old and new spaces cannot both contain a label anticommuting with $p$. To see this, let an old stabilizer be $A$ and a new one be $Q$. Then $U^\dagger Q U$ is a combination of two Paulis $Q$ and $-iP_pQ$ with nonzero coefficients. Exactly one anticommutes with $A$. Projecting its eigenray equation onto the eigenspace of $A$ opposite to $v$ leaves a nonzero Pauli applied to $v$, a contradiction. Equation~\eqref{eq:common} then shows $d_G(S_v,S_{Uv})\le1$.

At $\beta\in(\pi/2)\mathbb Z$ the binary space is unchanged. At $\beta\in\pi/4+(\pi/2)\mathbb Z$, Clifford transvection changes it by distance at most two. For $\beta\in(\pi/8)\mathbb Z$, all cases therefore satisfy
\begin{equation}
 |\beta|\ge\frac\pi8d_G(S_v,S_{Uv}).
\end{equation}
For $\ket0^{\otimes N}\otimes\alpha$ and $\ket+^{\otimes N}\otimes\gamma$, the spaces are respectively $\langle Z_1,\ldots,Z_N\rangle\oplus S_\alpha$ and $\langle X_1,\ldots,X_N\rangle\oplus S_\gamma$. Their intersection has no nonzero output label, so their distance is $2N+d_G(S_\alpha,S_\gamma)\ge2N$. The triangle inequality proves
\begin{equation}
 L\ge N\pi/4.
\end{equation}
The workspace may be arbitrary and changed. At arbitrary small non-Clifford angles, however, losing one exact stabilizer costs arbitrarily little; this distance argument supplies no continuous-angle extensive bound.

\section{Complex population quotients and the positive-product obstruction}\label{sec:matching}
Let $\pi:\mathbb F_2^M\to\mathbb F_2^n$ be a surjective binary quotient. Group computational amplitudes into complex fiber vectors $r_x$, and put $w_x=\|r_x\|_2$. For a skew-Hermitian physical generator $A=i\sum_Pu_PP$, set
\begin{equation}
 K_{xy}=\frac{\operatorname{Re}(r_x^\dagger A_{xy}r_y)}{w_xw_y}
 \quad(x\ne y),\qquad K_{xx}=0,
\end{equation}
with zero assigned at a zero denominator. This real matrix is skew. Differentiating fiber norms shows $\dot w=Kw$ almost everywhere; internal fiber blocks have zero real expectation. A nonnegative absolutely continuous scalar has zero derivative almost everywhere on its zero set, which covers zero fibers.

Write $f(P)$ for the physical $X/Y$ flip mask. A projected nonzero mask $g=\pi f(P)$ gives a perfect matching $\{x,x+g\}$ of quotient labels, with norm-one monomial blocks. Cauchy--Schwarz bounds every induced matching entry by the sum of $|u_P|$ over that mask. Define the enlarged matching charge
\begin{equation}
 \Lambda_n(K)=\sum_{g\ne0}\max_{\{x,x+g\}}|K_{x+g,x}|.
 \label{eq:matchingnorm}
\end{equation}
Then
\begin{equation}
 \Lambda_n(K)\le\sum_{P:\pi f(P)\ne0}|u_P|\le\sum_P|u_P|.
 \label{eq:contraction}
\end{equation}
The enlarged model allows arbitrary matching edge signs; for $n\ge3$ it need not have the same charge as a physical Pauli expansion.

\begin{theorem}[Every-dimensional positive-product tangent bound]\label{thm:product}
For $0\le\theta_i\le\pi/4$, let
\begin{equation}
 r_a=(\cos a,\sin a),\qquad w=\bigotimes_{i=1}^n r_{\theta_i}.
\end{equation}
If $Kw=\sum_iv_i\partial_{\theta_i}w$, then
\begin{equation}
 \Lambda_n(K)\ge\left|\sum_iv_i\right|.
 \label{eq:tangent}
\end{equation}
Consequently any matching path confined to this product cube from a pole to the positive uniform vector costs at least $n\pi/4$.
\end{theorem}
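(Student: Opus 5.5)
The plan is to prove the tangent inequality by LP-type duality: find one covector $z$ that turns the skew matrix $K$ into a bound on $\sum_iv_i$. Suppose $z\in\mathbb R^{2^n}$ satisfies two conditions: (a) $z^T\partial_{\theta_i}w=1$ for every $i$; (b) for every nonzero mask $g$,
\begin{equation}
 \sum_{\{x,x+g\}}\bigl|z_{x+g}w_x-z_xw_{x+g}\bigr|\le1 .
\end{equation}
Then pairing $Kw=\sum_iv_i\partial_{\theta_i}w$ with $z$ gives $\sum_iv_i=z^TKw$. Because $K$ is skew and supported on the matchings, $z^TKw=\sum_{g\ne0}\sum_{\{x,x+g\}}K_{x+g,x}(z_{x+g}w_x-z_xw_{x+g})$. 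Bounding each $|K_{x+g,x}|$ by its maximum over the matching and using (b) yields $|\sum_iv_i|\le\Lambda_n(K)$ as in Eq.~\eqref{eq:tangent}. For $n=1$ the choice $z=(-\sin a,\cos a)$ satisfies both conditions with equality, which fixes the normalization.

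For general $n$ I would first try the natural candidate $z=\sum_i\partial_{\theta_i}w$. Since the $\partial_{\theta_i}w$ are orthonormal, (a) holds automatically. Write $z_x=w_x\sum_ic_i(x_i)$ with $c_i(0)=-\tan\theta_i$ and $c_i(1)=\cot\theta_i$. Each pair term then factorizes as $w_xw_{x+g}\bigl|\sum_{i\in g}\pm2/\sin2\theta_i\bigr|$. Summing over pairs, the coordinates outside $g$ contribute total weight one, so with $s_i=\sin2\theta_i$ condition (b) reduces to
\begin{equation}
 \prod_{i\in g}s_i\;\mathbb E_\varepsilon\Bigl|\sum_{i\in g}\varepsilon_i/s_i\Bigr|\le1 .
\end{equation}
This holds for $|g|\le2$ but fails at $s_i=1$ for $|g|=3$. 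So the naive gradient is not enough, and this is the step I expect to be the main obstacle: one must construct $z$ so that high-weight masks are not overcharged.

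First I would modify $z$ within the affine family fixed by (a): add components orthogonal to the tangent span and to $w$, chosen from the product structure. The goal is to rebalance each mask-$g$ pair sum, for instance by replacing the linear combination of the $c_i$ with a term depending only on the Hamming-type count of $x$ restricted to coordinates. The hypothesis $0\le\theta_i\le\pi/4$, that is $\cos\theta_i\ge\sin\theta_i$ and positivity of $w$, is what should permit this. I would aim to prove (b) for a $g$-independent $z$ by induction on $|g|$, factoring out one coordinate at a time. Failing a closed form, I would write (b) together with (a) as a finite linear program in $z$ for each fixed $|g|$. By product symmetry it depends only on the restricted coordinates, and I would extract the dual certificate there.

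Once the tangent bound holds, the consequence is an integration argument. Along a matching path confined to the cube, $\frac{d}{dt}\sum_i\theta_i=\sum_iv_i\le\Lambda_n(K(t))$ almost everywhere; absolute continuity follows as for the fiber norms established above. Integrating from a pole, where $\sum_i\theta_i=0$, to the positive uniform vector, where $\sum_i\theta_i=n\pi/4$, gives total matching charge at least $n\pi/4$. By Eq.~\eqref{eq:contraction} this bounds the physical Pauli length from below.
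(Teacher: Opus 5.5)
Your duality setup is the right one, and it is the same as the paper's. You take a covector $z$ with $z^T\partial_{\theta_i}w=1$ for all $i$ and with dual slope $H_g=\sum_{\{x,x+g\}}|z_{x+g}w_x-z_xw_{x+g}|\le1$ for every mask $g$. This gives $|\sum_iv_i|=|z^TKw|\le\sum_g\|K_g\|_\infty H_g\le\Lambda_n(K)$. Your computation for the naive gradient $z=\sum_i\partial_{\theta_i}w$ is also correct: it gives $H_g=\prod_{i\in g}s_i\,\mathbb E|\sum_{i\in g}\varepsilon_i/s_i|$, which equals $3/2$ for $|g|=3$ at $s_i=1$.

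The gap is that the proposal stops exactly where the theorem's content lies. You never produce a $z$ satisfying (b) for all masks at once, so the tangent inequality is not proved. The remedies you sketch do not close this:
\begin{itemize}
\item The Hamming-count correction is not constructed. When the $\theta_i$ differ, there is no permutation symmetry that would reduce the problem to Hamming counts.
\item Writing (a)--(b) as a linear program and ``extracting the dual certificate'' is circular. By finite-dimensional duality, feasibility of that program is equivalent to the tangent bound itself.
\item The program cannot be solved separately for each $|g|$, because a single $z$ must serve every mask in the final estimate.
\end{itemize}

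The missing idea is an ordered, deliberately asymmetric certificate. Order the coordinates so that $t_1\ge\cdots\ge t_n>0$, where $t_i=\tan\theta_i$. Keep your one-coordinate factors $A_i(0)=-t_i$, $A_i(1)=1/t_i$, but switch the $i$th one on only when all later bits vanish, with a compensating weight:
\[
 z_x=w_xR_x,\qquad R_x=\sum_iD_iA_i(x_i)\mathbf 1_{x_{i+1}=\cdots=x_n=0},\qquad D_i=\prod_{j>i}(1+t_j^2).
\]

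\emph{Condition (a).} It survives by independence. In $z^T\partial_{\theta_k}w=\sum_xw_x^2R_xA_k(x_k)$, only the $i=k$ term is nonzero, and $D_k\prod_{j>k}\cos^2\theta_j=1$.

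\emph{Condition (b).} Fix a mask $g$ with highest set bit $k$.
\begin{itemize}
\item Edges carrying a nonzero bit above $k$ join points with equal ratio $R$, so they contribute nothing.
\item On the remaining edges, oriented with $x_k=0$, the ratio increment is $D_k(1+t_k^2)\bigl(1/t_k-R^{(k-1)}_p\bigr)$. Here $R^{(k-1)}_p$ is the same ratio built from the first $k-1$ coordinates and evaluated on the prefix $p$ of $x$.
\item This increment is nonnegative, because induction gives $\max R^{(k-1)}\le1/t_{k-1}\le1/t_k$. This is where the ordering is used.
\end{itemize}
The absolute values therefore drop out, and the edge sum evaluates exactly to $Q_S-t_kM_S$, with $Q_S=\prod_{i\in S}\sin2\theta_i$ and $S=\{i<k:g_i=1\}$. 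Moreover $M_S\ge0$, since each of its terms carries a factor $\cos2\theta_i\ge0$ with $i\in S$, times sines and tangents. This is where $\theta_i\le\pi/4$ actually enters, and it gives $H_g\le Q_S\le1$.

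\emph{Zero-angle faces.} This certificate is built at positive angles. The faces with some $\theta_i=0$, including the pole where your integration starts, therefore need a separate step. The paper uses continuity of the induced tangent norm $\nu(w,d)=\min_{Kw=d}\Lambda_n(K)$.

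Once the pointwise bound holds on the closed cube, your integration argument is fine.
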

\begin{proof}
First assume positive angles and order $t_i=\tan\theta_i$ so $1\ge t_1\ge\cdots\ge t_n>0$. Put $A_i(0)=-t_i$, $A_i(1)=1/t_i$, and $D_i=\prod_{j>i}(1+t_j^2)$. Define a horizontal cotangent $\lambda_x=w_xR_x$ by
\begin{equation}
 R_x=\sum_i A_i(x_i)D_i\,\mathbf1_{x_{i+1}=\cdots=x_n=0}.
 \label{eq:recursive}
\end{equation}
Under the product probabilities $w_x^2$, $A_i$ has mean zero and mean square one. The suffix factor obeys $D_i\prod_{j>i}\cos^2\theta_j=1$. Independence yields $\lambda\cdot w=0$ and
\begin{equation}
 \lambda\cdot\partial_{\theta_i}w=1\quad\text{for every }i.
\end{equation}
For the prefix ratio $R^{(k)}$, the recursion gives $R^{(k)}_{(p,1)}=1/t_k$ and $R^{(k)}_{(p,0)}=(1+t_k^2)R^{(k-1)}_p-t_k$. Induction using the ordering gives $\max R^{(k)}\le1/t_k$.

The dual slope of a matching $g$ is $H_g=\sum_{\{x,x+g\}}|\lambda_{x+g}w_x-\lambda_xw_{x+g}|$. Let $k$ be the largest set-bit of $g$ and $S=\{i<k:g_i=1\}$. Nonzero suffixes after $k$ have equal ratios and contribute zero. Orient the remaining edges with bit $k$ equal to zero. Their ratio difference is $D_k(1+t_k^2)[1/t_k-R^{(k-1)}_p]\ge0$. Summing amplitude products gives
\begin{align}
 H_g&=Q_S-t_kM_S,\qquad Q_S=\prod_{i\in S}\sin2\theta_i,\\
 M_S&=\sum_{i\in S}\cos2\theta_i
 \prod_{\substack{j<i\\j\in S}}\sin2\theta_j
 \prod_{\substack{i<j<k\\j\in S}}t_j\ge0.
\end{align}
The formula follows by summing each term of Eq.~\eqref{eq:recursive}: the $i$th term vanishes if $i\notin S$, and otherwise its local contribution is $\cos2\theta_i$; the earlier factors give sines and the later zero-bit factors give tangents. Thus $0\le H_g\le Q_S\le1$. It follows that
\begin{equation}
 \left|\sum_iv_i\right|=|\lambda\cdot Kw|
 \le\sum_g\|K_g\|_\infty H_g\le\Lambda_n(K).
\end{equation}
At zero-angle faces, use continuity of the induced tangent norm $\nu(w,d)=\min_{Kw=d}\Lambda_n(K)$. Its minimum exists in finite dimension. Lower semicontinuity follows from compactness of bounded generators; upper semicontinuity follows by correcting a nearby residual $\delta=d'-Kw'$ with the skew matrix $\delta w'^T-w'\delta^T$, whose charge tends to zero. Hence the positive-angle bound extends to all faces.

For an absolutely continuous path in the cube, each angle is absolutely continuous: the norm of the bit-one subvector is $\sin\theta_i$, and $\arcsin$ is Lipschitz on $[0,1/\sqrt2]$. Integrate Eq.~\eqref{eq:tangent} to bound the absolute change of $\sum_i\theta_i$. It is $n\pi/4$ between the stated endpoints.
\end{proof}

Equation~\eqref{eq:contraction} transfers the theorem to any complex physical path whose quotient fiber-norm vector stays product in this cube. The fibers themselves may be entangled and have arbitrary phases. This is not a global scalar calibration on correlated quotient populations.

\subsection{Why a correlated boundary does not tensorize automatically}
An exact obstruction illustrates the last limitation. Let $W=(1,2,2,1)/\sqrt{10}$ and $\delta=(3,0,0,-3)/(4\sqrt{10})$. Its two-bit induced tangent norm is $3/8$: the generator $(3/16)i(IY+YI)$ realizes it, while the cotangent $\lambda=W\odot(9/2,-1/2,-1/2,-1/2)$ gives the same lower bound. At $u=(W,W)/\sqrt2$, let
\begin{equation}
 K=\begin{pmatrix}0&B\\-B^T&0\end{pmatrix},\quad B=S+T,\quad
 S=\begin{pmatrix}0&1/4&1/4&0\\1/4&0&3/4&1/4\\1/4&3/4&0&1/4\\0&1/4&1/4&0\end{pmatrix},
 \quad T_{03}=-T_{30}=3/4.
\end{equation}
Here $SW=W$, $TW=\delta$, so $Ku=(W+\delta,-W+\delta)/\sqrt2$. Its four nonzero matching charges are $1/4,1/4,3/4,0$, totaling $5/4$, below the proposed split $1+3/8=11/8$. Thus a universal logical tangent splitting cannot be inferred from Theorem~\ref{thm:product}. The enlarged generator is not asserted to have physical Pauli charge $5/4$, and this example is not a physical endpoint shortcut.

\section{Two-output and low-deficiency endpoint bounds}\label{sec:deficiency}
This section sharpens the endpoint restrictions with arbitrary-angle results. The fixed two-output theorem is unrestricted in phases, word length, and workspace. The subsequent all-register theorem imposes a bound on the stabilizer deficiency at pulse boundaries.

\subsection{The real two-qubit reduction}
Identify a real four-component vector with the ordinary quaternion $v_{00}+v_{01}\mathbf i+v_{10}\mathbf j+v_{11}\mathbf k$. The six real skew Pauli generators have the following exact actions:
\begin{equation}
\begin{array}{c|rrrrrr}
 P&IY&ZY&YZ&YI&YX&XY\\\hline
 iP&-L_{\mathbf i}&-R_{\mathbf i}&-L_{\mathbf j}&-R_{\mathbf j}&-L_{\mathbf k}&-R_{\mathbf k}
\end{array}
\end{equation}
Left and right multiplications commute, and their total angle charge is the physical charge. A positive product $r_a\otimes r_b$, $r_a=(\cos a,\sin a)$, corresponds to $e^{b\mathbf i}e^{a\mathbf j}$. Lemma~\ref{lem:quaternion} below gives its real preparation cost $a+b$ for $a,b\in[0,\pi/4]$. Both real lifts $\pm r_a\otimes r_b$ are covered. Conjugating a transition by a local Hadamard on every wire whose angle decreases reflects $a\mapsto\pi/4-a$ and preserves the signed odd-$Y$ axis set. Prepending independent preparation then proves
\begin{equation}
 d_{{\rm real},2}(r_{a_1}\otimes r_{a_2},r_{b_1}\otimes r_{b_2})
 =|b_1-a_1|+|b_2-a_2|\quad(0\le a_i,b_i\le\pi/4).
 \label{eq:realprefix}
\end{equation}
The use of a conjugation symmetry does not insert a free Hadamard into the physical circuit.

\subsection{Complex two-bit norm contraction}
For a two-bit quotient, each nonzero projected flip has precisely two matching edges. The two associated odd-$Y$ real axes have edge sign patterns $(+,+)$ and $(+,-)$, up to orientation. Edge coefficients $k_1,k_2$ are represented by $(k_1+k_2)/2$ and $(k_1-k_2)/2$, with exact charge
\begin{equation}
 \left|\frac{k_1+k_2}2\right|+\left|\frac{k_1-k_2}2\right|
 =\max(|k_1|,|k_2|).
\end{equation}
Thus the complex quotient construction of Sec.~\ref{sec:matching} is, for two bits, an exact contraction to real odd-$Y$ Pauli controls. No extra Walsh-sign relaxation is needed. Equation~\eqref{eq:realprefix} therefore bounds every complex physical transition whose two-bit norm vectors are those positive products. In particular,
\begin{equation}
 C_2(\ket+)=\pi/2.
 \label{eq:C2plus}
\end{equation}
More strongly, any two initially blank output bits that finish uniformly populated require charge at least $\pi/2$ on generators flipping at least one of those outputs, with arbitrary initial and final complex workspace and no final independence assumption. This fixed-size identity does not determine $C_\infty(\ket+)$.

\subsection{A compatible potential for codes with at most two logical bits}
Put $q_0=\pi/4$. On nonnegative normalized $k$-bit vectors, $k=1,2$, define
\begin{equation}
 \Phi_k(w)=\min_{x\in\mathbb F_2^k}d_{{\rm real},k}([e_x],[w]),\qquad\Phi_0=0.
\end{equation}
Distance to this finite pole set is 1-Lipschitz. Binary affine relabelings preserve it by conjugating signed real Pauli axes and permuting the poles. The two-bit contraction makes $\Phi_2$ 1-Lipschitz along any complex physical quotient path. The real endpoint bound and completion to uniform give
\begin{equation}
 \Phi_1(r_a)=\bar a,\quad\Phi_2(r_a\otimes r_b)=\bar a+\bar b,
 \quad\bar a=\min(a,\pi/2-a),\qquad
 \Phi_2(e_0\otimes r_a)=\Phi_1(r_a).
 \label{eq:Phiboundary}
\end{equation}
Indeed, from any pole a product target completed to uniform costs at least $2q_0$, and the completion costs $2q_0-\bar a-\bar b$. For the last identity, a pole outside the occupied hyperplane is orthogonal to the target and costs at least $\pi/2$, which cannot beat the nearer in-plane pole.

Let a signed commuting code $T$ have rank $M-k$, $k\le2$. Write $V=\operatorname{flip}(T)$, $\rho=\dim V$, and $D=T\cap Z_M$, so $\dim D=M-k-\rho$. The diagonal constraints give an affine support of dimension $\rho+k$. Its $V$-orbits are the $2^k$ logical computational sectors; stabilizer equations force equal magnitudes within each orbit. Let $w_T(v)$ be the vector of square roots of their probabilities, and define
\begin{equation}
 F_T(v)=\rho q_0+\Phi_k(w_T(v)).
 \label{eq:codepotential}
\end{equation}
Affine invariance makes the definition independent of logical coordinates. If a pulse commutes with $T$, its full path stays in this code. Fixed quotient contraction gives $|F_T(e^{i\beta P}v)-F_T(v)|\le|\beta|$.

If $v$ also has one additional signed stabilizer $s$, then $F_T(v)=F_{T+\langle s\rangle}(v)$. For $\operatorname{flip}(s)\in V$, a representative is diagonal modulo $T$ and pins a logical bit; the last identity in Eq.~\eqref{eq:Phiboundary} applies. Otherwise the extra stabilizer pairs logical sectors with equal magnitudes. Affine coordinates express their norm vector as $r_{q_0}\otimes w'$ and increase the flip rank by one; Eq.~\eqref{eq:Phiboundary} gives the same value. Iterating proves compatibility for any containing code of deficiency at most two. Define $F(v)$ using its full signed Pauli stabilizer group whenever its deficiency is at most two.

\subsection{Generic and quarter-angle pulses}
For a generic angle $\sin2\beta\cos2\beta\ne0$, the argument of Sec.~\ref{sec:grid} says the old and new codes share their commuting part and cannot both have an anticommuting stabilizer. Consequently one boundary code contains the other, and both rays lie in a common signed code of deficiency at most two commuting with $P$. Compatibility and its Lipschitz estimate bound the change in $F$ by $|\beta|$. A Pauli half-turn merely translates logical sectors and preserves $F$.

For a quarter-turn $U=e^{iq_0P_p}$ let $T'=UTU^\dagger$. If the code is preserved and the pulse commutes with it, use the fixed-code estimate. If it is preserved with an anticommuting label, transvection gives $p\in T$, which contradicts isotropy of $T$ with that label; thus this case cannot occur. Otherwise $T_0=T\cap p^\perp$ has rank one less than $T$.

If a diagonal stabilizer is lost, the projected flip $f$ is outside $V$, the diagonal rank decreases by one, and the flip rank increases by one. Each old diagonal logical label $u_i$ can be changed to $u_i+(u_i\cdot f)v$, where $v$ is a lost diagonal label. These new logical observables commute with the pulse and measure the same old probabilities up to translation. Their joint distribution is preserved, giving $F(Uv)-F(v)=q_0$. A gained diagonal stabilizer is the inverse case. These cases cannot occur simultaneously.

If the diagonal rank is unchanged, put $V_0=\operatorname{flip}(T_0)$, of rank $\rho-1$, and choose a lost generator with flip $g$. Then $V=V_0+\langle g\rangle$ and $V'=V_0+\langle g+f\rangle$. If $V'=V$, the unchanged rank forces $f\in V_0$, so all logical diagonal observables commute with the pulse and $F$ is unchanged. Otherwise $f\notin V$. Choose a diagonal label $a$ such that $a\cdot V_0=a\cdot f=0$ and $a\cdot g=1$. Let $\Pi$ select either eigenspace of $Z^a$. Both projected states
\begin{equation}
 \chi=\sqrt2\Pi v,\qquad\chi'=\sqrt2\Pi Uv=U\chi
\end{equation}
are normalized, since the old and new lost stabilizers exchange the two eigenspaces. They lie in the same pinned code $T_0+\langle\pm Z^a\rangle$, of the original rank and flip space $V_0$. Each old or new logical orbit intersects the selected hyperplane in precisely one $V_0$-orbit, and projection followed by the $\sqrt2$ factor preserves its full probability. Thus the old and new logical population vectors are exactly those of $\chi,\chi'$ in this common code, under affine identifications. Its fixed-code estimate bounds the change of the $\Phi_k$ term by $q_0$. The flip rank is unchanged, so $|F(Uv)-F(v)|\le q_0$. Other quarter-angle representatives differ by Pauli half-turns and have at least this charge.

\begin{theorem}[Continuous-angle two-deficiency endpoint bound]
For a finite word on $M$ qubits, suppose every pulse-boundary ray has exact Pauli stabilizer rank at least $M-2$. If the word takes $\ket0^{\otimes M}$ to $\ket+^{\otimes N}\otimes\gamma$, then $L\ge N\pi/4$. The same bound holds with an exactly returned pure catalyst of deficiency at most two, subject to the joint boundary hypothesis.
\end{theorem}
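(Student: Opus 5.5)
The plan is to use the code potential $F$ of Eq.~\eqref{eq:codepotential} as a calibration along the word. We evaluate it only at pulse-boundary rays. By hypothesis, each such ray $v$ has a full signed stabilizer group $T_v$ of rank at least $M-2$, so $F(v)=\rho(T_v)q_0+\Phi_k(w_{T_v}(v))$ is defined. The compatibility statement shows that $F(v)$ does not depend on which containing code of deficiency at most two is used.

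The first step is to fix the endpoint values. For the blank $\ket0^{\otimes M}$, the stabilizer group is the full $Z$ code, so $\rho=0$ and $k=0$, giving $F=0$. For the output $\ket+^{\otimes N}\otimes\gamma$ we do not know $\gamma$, but we can pass to a subcode. Let $T=\langle X_1,\dots,X_N\rangle\oplus S'_\gamma$, where $S'_\gamma$ is the signed stabilizer group of $\gamma$, of deficiency at most two by hypothesis. Its flip space contains the $N$ independent output $X$ labels. Hence $\rho\ge N$, while $\Phi_k\ge0$, so $F\ge Nq_0=N\pi/4$. Strictly, we want the contribution of $\gamma$ to be nonnegative as well. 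This holds because $\rho q_0$ and $\Phi_k$ are both nonnegative, so $F\ge N\pi/4$ regardless of $\gamma$.

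The second step is the per-pulse estimate $|F(e^{i\beta P}v)-F(v)|\le|\beta|$, assembled from the case analysis just given.
\begin{itemize}
\item For generic $\beta$ with $\sin2\beta\cos2\beta\ne0$, one boundary code contains the other. Both rays then lie in a common code commuting with $P$, and the fixed-code contraction gives the bound. Here we use the Lipschitz property of $\Phi_k$ along the continuous intermediate path within that code. The intermediate rays need not satisfy the hypothesis, since only the common code is used.
\item For half-turns, $F$ is preserved.
\item For quarter-turns, the subcases give a change of at most $q_0\le|\beta|$: code preserved, diagonal stabilizer lost or gained, and unchanged diagonal rank handled via the pinned projection $\chi=\sqrt2\Pi v$.
\item Angles that are neither generic nor multiples of $\pi/4$ do not occur, since $\sin2\beta\cos2\beta=0$ forces $\beta\in(\pi/4)\mathbb Z$. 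Larger representatives reduce modulo half-turns, which cost nothing in $F$ but do add charge.
\end{itemize}
Telescoping over the word then gives $L\ge F(\text{output})-F(\text{blank})\ge N\pi/4$.

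For the catalyst version, we apply the same telescoping to the joint register. The catalyst $c$ appears at both ends, and the joint boundary hypothesis keeps $F$ defined throughout. It remains to check that $F(\ket0\otimes c)$ and $F(\ket+^{\otimes N}\otimes c)$ differ by at least $Nq_0$. The plan is to show that tensoring with $\ket+^{\otimes N}$ raises the flip rank by exactly $N$, leaves the logical norm vector unchanged, and leaves $\Phi_k$ unchanged up to affine relabeling. This is because the logical sectors of the combined code are those of $c$'s code.

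The main obstacle is making the output endpoint rigorous when $\gamma$ and the catalyst are arbitrary. We must check that the full stabilizer group of the joint state genuinely splits as a direct sum, so that the flip rank is additive. We also need that $\Phi_k$ is additive, or at least monotone, under appending the $+$ factors. I would handle this with the compatibility identities of Eq.~\eqref{eq:Phiboundary}, noting that $r_{q_0}$ factors enter $\Phi$ only through the flip rank.
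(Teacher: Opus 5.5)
Your proposal is correct and follows essentially the paper's own route: you telescope the code potential $F$ through the generic, half-turn, and quarter-turn pulse estimates, with $F=0$ at the blank, $F\ge Nq_0$ at the output from the direct-sum stabilizer space of a product ray, and cancellation of the returned catalyst. Two points need tightening. First, the splitting you flag as the main obstacle is immediate: $P_1\otimes P_2$ has $a\otimes b$ as an eigenvector iff $P_1a\propto a$ and $P_2b\propto b$. Second, in the catalyst case you dropped $\gamma$ from the output, so you must prove $F(\ket+^{\otimes N}\otimes\gamma\otimes c)\ge Nq_0+F(c)$; this follows from the same splitting together with $\Phi_2(r_a\otimes r_b)=\bar a+\bar b$ from Eq.~\eqref{eq:Phiboundary}, under the joint deficiency bound.
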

\begin{proof}
The preceding pulse estimates telescope. The blank has $F=0$. Product rays have direct-sum exact stabilizer spaces, so the final potential is $Nq_0+F(\gamma)\ge Nq_0$; the boundary hypothesis ensures $F(\gamma)$ is defined. An exactly returned catalyst contributes the same value to both endpoints and cancels. No arbitrary changed initially supplied workspace is covered by this continuous-angle statement.
\end{proof}
Hence any strict clean-input endpoint saver must reach a pulse-boundary ray with deficiency at least three. This is a necessary condition, not a proof of the unrestricted endpoint.

\section{All-word quaternion lemma used in the two-bit bound}\label{sec:quaternion}
\begin{lemma}\label{lem:quaternion}
Let $D(Q)$ be the infimum of signed coordinate-axis quaternion word lengths reaching $Q$ or $-Q$. For orthogonal imaginary coordinate axes $A,B$ and $0\le a,b\le\pi/4$,
\begin{equation}
 D(e^{aA}e^{bB})=a+b.
\end{equation}
\end{lemma}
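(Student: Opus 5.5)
The plan is to prove the two inequalities $D\le a+b$ and $D\ge a+b$ separately. The upper bound comes from the two-letter word itself. The lower bound comes from a calibration on unit quaternions modulo sign, in the same spirit as the proof of Theorem~\ref{thm:C1}.

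\textbf{Upper bound.} The word $e^{aA}e^{bB}$ has length $a+b$, so $D\le a+b$.

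\textbf{Reduction by symmetry.} Signed permutations of the coordinate axes $\mathbf i,\mathbf j,\mathbf k$ that come from quaternion automorphisms map words to words of equal length. So I may take $A,B$ to be any fixed pair of orthogonal axes, and $AB=\pm C$ for the third axis. Expanding,
\begin{equation}
 e^{aA}e^{bB}=\cos a\cos b+\sin a\cos b\,A+\cos a\sin b\,B\pm\sin a\sin b\,C .
\end{equation}
Since $0\le a,b\le\pi/4$, the largest absolute component is $\cos a\cos b$ and the smallest is $\sin a\sin b$. This suggests the potential
\begin{equation}
 F(Q)=\arccos\bigl(m_{\max}(Q)-m_{\min}(Q)\bigr),
\end{equation}
where $m_{\max},m_{\min}$ are the largest and smallest of the four absolute coordinates of $Q$. $F$ is invariant under $Q\mapsto -Q$ and under signed axis permutations. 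It satisfies $F(1)=0$, and $F(e^{aA}e^{bB})=\arccos\cos(a+b)=a+b$.

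\textbf{Key step: the Lipschitz estimate.} I need $|F(e^{tE}Q)-F(Q)|\le|t|$ for every axis $E$. Right multiplication need not be checked separately: any word ending at $Q$ can be built by successive left multiplications, and only the endpoint up to sign matters. Left multiplication by $e^{tE}$ rotates the four coordinates in two disjoint coordinate planes, each by angle $t$ with a sign pattern. The pairing is $\{1,E\}$ and $\{F',G'\}$, where $F',G'$ are the other two axes.

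\emph{Generic case.} Suppose the maximal and minimal coordinates are attained uniquely. Let $M,m$ denote them, with partners $M',m'$ in their respective planes. Then
\begin{equation}
 \frac{d}{dt}(M-m)=\pm M'\pm m'\ \text{(or the single term, if } M,m \text{ share a plane)},
\end{equation}
so $|\tfrac{d}{dt}(M-m)|\le |M'|+|m'|$.

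\emph{What must be shown.} The inequality needed is
\begin{equation}
 |M'|+|m'|\le\sqrt{1-(M-m)^2}.
\end{equation}
When $M'$ and $m'$ are the two middle coordinates $u,v$, this should follow from normalization: $(M-m)^2+(u+v)^2\le M^2+m^2+u^2+v^2+2uv-2Mm$, and $2uv\le 2Mm$ holds whenever $M\ge u,v\ge m$. In the remaining case, $M$ and $m$ are paired with each other. Then the derivative is $\pm(m\pm M)$ in absolute value at most $M+m$, so I should instead directly check $|\tfrac{d}{dt}(M-m)|\le\sqrt{1-(M-m)^2}$. A two-dimensional rotation changes $M-m$ at rate at most $\sqrt{M^2+m^2}\le\sqrt{1-(M-m)^2}$, which requires $2Mm\le u^2+v^2$; this needs checking, or possibly refining $F$ on that sector.

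\emph{Seams and integration.} At coincidences of sorted coordinates, the one-sided derivatives obey the same bound. A generic path meets such seams only at isolated times, so the slope bound integrates, exactly as in the proof of Theorem~\ref{thm:C1}. Exceptional starting points are handled by continuity. One must also note that $\arccos$ has derivative blowing up only where $M-m=1$, i.e.\ at $\pm$ axis quaternions. I will cap $F$ at $\pi/2$ and use $\min$ with the distance from the identity if needed.

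\textbf{Conclusion and main obstacle.} Telescoping along any word from $1$ then gives $D(Q)\ge F(Q)$, hence $D(e^{aA}e^{bB})\ge a+b$. I expect the main obstacle to be the case analysis in the slope bound, specifically the configurations where the extreme coordinates share a rotation plane. If the simple potential fails there, I would replace it by the sum of two sorted-coordinate angles, $\arcsin$ of the second-largest plus $\arcsin$ of the smallest, adapted to the two-rotation-plane structure.
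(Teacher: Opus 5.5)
Your upper bound, the reduction by proper signed axis permutations, and the remark that only left multiplications need to be controlled are all correct. The gap is the whole lower bound. The potential $F(Q)=\arccos(m_{\max}-m_{\min})$ is not a calibration, and it is not even bounded above by $D$.

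It has a square-root singularity at the identity. With $c=\cos\epsilon$, $s=\sin\epsilon$, the word $e^{\epsilon\mathbf k}e^{\epsilon\mathbf j}e^{\epsilon\mathbf i}$ has length $3\epsilon$ and coordinates $(c^3+s^3,\,cs(c-s),\,cs(c+s),\,cs(c-s))$ in the basis $1,\mathbf i,\mathbf j,\mathbf k$. Hence $m_{\max}-m_{\min}=1-\epsilon+O(\epsilon^2)$ and $F=\sqrt{2\epsilon}\,(1+O(\epsilon))>3\epsilon\ge D$ for small $\epsilon$; at $\epsilon=0.01$, $F\approx0.14$.

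Both slope estimates you sketched are false, not merely unchecked.
\begin{itemize}
\item \emph{$M$ and $m$ in different rotation planes.} You need $uv\le Mm$, but $M\ge u\ge v\ge m$ gives no such comparison, since $u\le M$ while $v\ge m$. Take absolute coordinates proportional to $(0.6,0.55,0.5,0.3)$, the axis pairing $M$ with $u$, and signs making the two contributions add. Then $|\tfrac{d}{dt}F|=(u+v)/\sqrt{1-(M-m)^2}\approx1.10$. At $e^{aA}e^{bB}$ itself $uv=Mm$ holds exactly, so the estimate fails arbitrarily close to the target.
\item \emph{$M$ and $m$ in the same plane.} The single rotation locks the signs, so $|\tfrac{d}{dt}(M-m)|=M+m$, not $\sqrt{M^2+m^2}$. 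You would then need $M^2+m^2\le u^2+v^2$, which is false whenever $M^2>1/2$. At the target, left multiplication by $e^{-\tau AB}$ gives $m_{\max}-m_{\min}=\cos(a+b)\cos\tau+\cos(a-b)\sin\tau$ for small $\tau\ge0$. So $F$ decreases at initial rate $\cos(a-b)/\sin(a+b)$, which exceeds $1$ whenever $0<a,b<\pi/4$.
\end{itemize}

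The suggested repairs do not help either.
\begin{itemize}
\item Capping at $\pi/2$ is irrelevant to a singularity at the starting point.
\item A minimum with the round distance $\arccos|q_0|$ ruins the endpoint value, since $\arccos(\cos a\cos b)<a+b$ for $a,b>0$.
\item A minimum with $D$ itself is circular.
\item The fallback $\arcsin(\text{second largest})+\arcsin(\text{smallest})$ equals $\pi/3$, not $\pi/2$, at $a=b=\pi/4$.
\end{itemize}
Any admissible potential must satisfy three things: it is $O(|q_1|+|q_2|+|q_3|)$ near the identity, it is exact on the product quaternions, and it is 1-Lipschitz on correlated real two-qubit states as well. The last requirement holds because, via the paper's quaternion dictionary, the lemma is a two-qubit additivity statement for real odd-$Y$ controls.

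The paper does not produce such a potential. It assumes a shorter minimum-time control for the homogeneous target $e^{\theta\mathbf i}e^{\theta\mathbf j}$ and applies the Pontryagin maximum principle. It then excludes each extremal class by costate radius:
\begin{itemize}
\item single bangs;
\item cyclic bang sequences, via a strict four-pulse shortening and an explicit exclusion of three-pulse words;
\item the equal three-axis control;
\item the dwell paths at $\rho=\sqrt2$.
\end{itemize}
It then deduces the case $a\ne b$ from the triangle inequality $2a\le D(e^{aA}e^{bB})+(a-b)$. That last reduction is available to you, but the homogeneous lower bound needs an argument your proposal does not contain.
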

\begin{proof}
We prove the homogeneous case $Q_\theta=e^{\theta\mathbf i}e^{\theta\mathbf j}$ first. Its coordinates are
\begin{equation}
 Q_\theta=(\cos^2\theta,\cos\theta\sin\theta,\cos\theta\sin\theta,\sin^2\theta).
\end{equation}
The displayed two-pulse word gives $D\le2\theta$. Suppose a shorter control exists. By compactness of the group and the convex closed coefficient $\ell_1$ ball, a minimum-time control exists; weak-star control convergence and uniform trajectory convergence pass the integral equation to the limit. It agrees with the finite-word infimum by Lemma~\ref{lem:control}. Its time satisfies $T<2\theta\le\pi/2$.

The free-time Pontryagin maximum principle has no abnormal extremal because its maximized Hamiltonian would force a zero costate. Normalize the switching costate by $\|p\|_\infty=1$; its Euclidean radius $\rho$ is constant in $[1,\sqrt3]$. In ordinary quaternion convention $\dot p=2u\times p$, with $\|u\|_1\le1$. A unique maximal coordinate forces the signed coordinate bang; ties permit only convex combinations of maximizing signed axes.

For $\rho<\sqrt2$, there is no tie and the control is one axis, which cannot reach $Q_\theta$. For $\sqrt2<\rho<\sqrt3$, no costate coordinate changes sign. After a proper signed axis change its bangs cycle through three coordinate faces. Their full interior durations are $\lambda=(\pi/2-2\arctan\sqrt{\rho^2-2})/2\in(0,\pi/4)$; the first and last are truncated. At a two-way tie the nonzero third coordinate forces immediate departure, so no dwell occurs.

Four consecutive positive cyclic pulses admit an exact shortening. If their interior angles are $b=c=\lambda$ and endpoint angles $a,d>0$, preserving the endpoint modulus gives $\cos2b\cos2c=\cos^22\lambda$. Vary $b=\lambda+\epsilon$ and solve analytically for $c=\lambda-\epsilon-4\epsilon^2/\sin4\lambda+O(\epsilon^3)$, adjusting $a+d$ and $d-a$ by the endpoint phases. For the two oriented cyclic orders, the quadratic changes in the sum of angles are respectively
\begin{equation}
 -\frac{(1-t)(1+t^2)}{t(1+t)}\epsilon^2,
 \qquad-\frac{(1+t)(1+t^2)}{t(1-t)}\epsilon^2,
 \qquad t=\tan\lambda\in(0,1),
\end{equation}
plus $O(\epsilon^3)$, both strictly negative. These follow by multiplying the four quaternions: the relevant complex pairs are $(\cos b\cos c\pm i\sin b\sin c)e^{i(a+d)}$, with the other pair determining $d-a$. All four angles remain positive for small $\epsilon$. Thus a minimizer in this radius range has at most three positive pulses.

At $\rho=\sqrt3$, the costate is a corner and the only stationary maximizing control is the equal three-axis control. It is locally nonminimal: in matrix convention a short segment $e^{iT(X+Y+Z)/3}$ has exact local Euler representation $e^{icZ}e^{ibY}e^{iaX}$ with $a=T/3-T^2/9+O(T^3)$, $b=T/3+T^2/9+O(T^3)$, $c=T/3-T^2/9+O(T^3)$. Its cost is $T-T^2/9+O(T^3)<T$. The identification $iX\mapsto\mathbf i,iY\mapsto\mathbf j,iZ\mapsto-\mathbf k$ preserves the charged axes; no improper quaternion automorphism is used.

\textit{Three-pulse endpoint exclusion.} Adjacent repeated axes merge. The remaining forms are $ABA$ and $ABC$ with signed angles. For $e^{a\mathbf i}e^{b\mathbf j}e^{d\mathbf i}$, the coordinates are $w=\cos b\cos(a+d)$, $r=\cos b\sin(a+d)$, $u=\sin b\cos(a-d)$, $v=\sin b\sin(a-d)$. Under the hypothesized cost $<\pi/2$, these give $|a|+|d|=\max(|a+d|,|a-d|)$. If the repeated-axis coordinate is $\cos\theta\sin\theta$, then $|b|=\theta$ and $|a+d|=\theta$. If it is $\sin^2\theta$, the bound is $\pi/4+\arcsin(\sin2\theta/\sqrt2)\ge2\theta$; the difference decreases to zero at $\pi/4$.

For an $ABC$ word, multiplication gives
\begin{align}
 w+y&=(\cos b+\sin b)\cos(a+d),& x+z&=(\cos b+\sin b)\sin(a+d),\\
 w-y&=(\cos b-\sin b)\cos(a-d),& x-z&=(\cos b-\sin b)\sin(a-d).
\end{align}
For $\theta<\pi/4$, $w$ exceeds each vector-coordinate magnitude, fixing principal branches with $|b|<\pi/4$. Axis and sign assignments reduce the resulting angle sum to $2\theta$ or one of
\begin{align}
 A(2\theta)&=\arctan(\sin2\theta)+\tfrac12\arcsin(\sin^22\theta),\\
 B(\theta)&=\arctan\frac{t(1+t)}{1-t}+\tfrac12\arcsin(\sin2\theta\cos2\theta),\qquad t=\tan\theta.
\end{align}
Both are at least $2\theta$. For $A(u)\ge u$, set $v=\tan(u/2)$ and compare $\sin^2u$ with $\sin[2(u-\arctan\sin u)]$; after extracting the positive factor $\sin u/(1+\sin^2u)$, the numerator is $2v(v-1)^2(2v^3+5v^2+1)/(1+v^2)^3\ge0$. For $B$, if $\theta\ge\pi/8$ its arctangent already exceeds $2\theta$. Otherwise compare $\sin2\theta\cos2\theta$ with $\sin(4\theta-2p)$, $p=\arctan[t(1+t)/(1-t)]$. The difference is
\begin{equation}
 \frac{4t^2(1-t^2)(2t^3+3t^2-2t+1)}{(1+t^2)^2(t^4+2t^3+2t^2-2t+1)}\ge0.
\end{equation}
The last numerator factor is positive since $3t^2-2t+1=3(t-1/3)^2+2/3$. At $\theta=\pi/4$, the $ABA$ identities force $|b|=|a+d|=\pi/4$; in $ABC$, the middle sign gives either $|b|=\pi/4,|a+d|=\pi/4$, or $b=0,|a+d|=\pi/2$. Thus no three-pulse shortcut exists. The negative lift is excluded throughout by round distance $\arccos(-\cos^2\theta)\ge2\pi/3>\pi/2$.

\textit{The radius $\rho=\sqrt2$.} Switching vertices are proper signed permutations of $(1,1,0)$, with complete coordinate connectors of length $q_0=\pi/4$. At $(1,1,0)$ the sole stationary tied control is $(\mathbf i+\mathbf j)/2$, a dwell $S(d)=e^{d(\mathbf i+\mathbf j)/2}$. Since $T<\pi/2$, at most one complete connector occurs. Transporting a second dwell through its connector combines the two dwell times exactly. A proper signed permutation therefore reduces every path to one of
\begin{align}
 &e^{\beta\mathbf i}S(d)e^{\alpha\mathbf j},\quad
 e^{\beta\mathbf j}S(d)e^{\alpha\mathbf j},\quad
 e^{(\beta+q_0)\mathbf i}S(d)e^{\alpha\mathbf j},\quad
 e^{(\beta+q_0)\mathbf j}S(d)e^{\alpha\mathbf j},\\
 &e^{\beta\mathbf k}e^{q_0\mathbf i}S(d)e^{\alpha\mathbf j},\quad
 e^{-\beta\mathbf k}e^{q_0\mathbf j}S(d)e^{\alpha\mathbf j},
 \qquad 0\le\alpha,\beta\le q_0,\quad d\ge0.
\end{align}
The last four cost an extra $q_0$. This includes both departures from each switching vertex and zero-duration boundary cases.

Here are explicit dominance estimates that exclude all but the first form. Put $C=\cos(d/\sqrt2)$, $s=\sin(d/\sqrt2)/\sqrt2$, so $s/C\le\tan d/2$ when $d<\pi/2$. For the cross-axis form $e^{\beta\mathbf i}S(d)e^{\alpha\mathbf j}$, with $\Sigma=\alpha+\beta$, the vector coordinates are
\begin{equation}
 x=C\cos\alpha\sin\beta+s\cos\Sigma,\quad
 y=C\sin\alpha\cos\beta+s\cos\Sigma,\quad
 z=C\sin\alpha\sin\beta+s\sin\Sigma.
\end{equation}
They are nonnegative. If $\beta\ge\alpha$, then $x\ge y$, with equality only for $\beta=\alpha$, and $x>z$ for $\Sigma>0$ and $\Sigma+d<\pi/2$. For $\Sigma\le\pi/4$ the latter is immediate from $x-z$; otherwise put $e=\pi/2-\Sigma$ and use
\begin{equation}
 \frac{\sin\beta(\cos\alpha-\sin\alpha)}{\cos e-\sin e}
 \ge\frac{\cos e+\sin e-1}{2(\cos e-\sin e)}
 \ge\frac{\tan e}{2}>\frac{\tan d}{2}\ge s/C.
\end{equation}
Interchanging the angles gives the other order. Since the transformed target has two equal largest vector magnitudes, a cross-axis form can match it only with $\alpha=\beta$ and the canonical target orientation. The form with $\beta+q_0$ cannot satisfy that equality below total time $\pi/2$.

For the same-axis form, write $\Delta=\beta-\alpha$. Its coordinates are $x=s\cos\Delta$, $y=C\sin\Sigma+s\cos\Sigma$, $z=-s\sin\Delta$. If $\Sigma>0$, then $y-s=\sin\Sigma[C-s\tan(\Sigma/2)]>0$ under the time bound. Thus $y$ is uniquely largest in magnitude, excluding the target; at $\Sigma=0$ its third coordinate vanishes. The same argument absorbs a $q_0$ connector on the same axis.

For $e^{\beta\mathbf k}e^{q_0\mathbf i}S(d)e^{\alpha\mathbf j}$, $\Sigma+d<q_0$. The three coordinates times $\sqrt2$ are
\begin{align}
 X&=C\cos\Sigma+s(\cos\Delta-\sin\Sigma),\\
 Y&=C\sin\Sigma+s(\cos\Sigma+\sin\Delta),\quad
 Z=C\sin\Sigma+s(\cos\Sigma-\sin\Delta).
\end{align}
They are nonnegative and $X-\max(Y,Z)\ge C(\cos\Sigma-\sin\Sigma)-2s\sin\Sigma>0$, using $\cot\Sigma-1>2\tan d/(1-\tan d)\ge2\tan d\ge2s/C$. This excludes that form. For the final form, before its $-\mathbf k$ pulse set $u=\alpha+q_0$. Its coordinates are $w=C\cos u-s\sin u>0$, $x=s\sin u$, $y=C\sin u+s\cos u$, $z=-s\cos u$. After that pulse $X\ge0,Z\le0$, while $Y-|Z|=C\sin(u-\beta)>0$ and $Y>X$ since $x/y\le s/C\le\tan d/2<\tan(q_0-\beta)$. It too has a unique maximal coordinate.

In the surviving first form, $\alpha=\beta$. Its endpoint costates are $p_0=(\cos2\alpha,1,\sin2\alpha)$ and $p_T=(1,\cos2\alpha,\sin2\alpha)$. The identity $p_T=\operatorname{Ad}(Q_\theta)p_0$ gives $\cos[2(\theta-\alpha)]=1$, hence $\alpha=\theta$ in the allowed range. Its time is $2\theta+d\ge2\theta$, the final contradiction. This completes the homogeneous all-word proof.

For unequal parameters, if $a\ge b$ right-multiply $e^{aA}e^{bB}$ by $e^{(a-b)B}$ to obtain $e^{aA}e^{aB}$. The triangle inequality gives $2a\le D(e^{aA}e^{bB})+a-b$, hence $D\ge a+b$. If $b\ge a$, use the analogous left multiplication. Proper signed coordinate permutations transport the homogeneous statement to any ordered orthogonal signed axes. The upper bound is the two displayed pulses.
\end{proof}

\section{Branch interference and the scope of collision arguments}\label{sec:branches}
For a finite word with pulses indexed chronologically, expand each factor as $\cos\phi_j I+i\sin\phi_jP_j$. A branch $b\in\mathbb F_2^K$ has magnitude $\prod_j|\cos\phi_j|^{1-b_j}|\sin\phi_j|^{b_j}$ and computational endpoint $Fb$, where the columns of $F$ are the flip masks. The exact output amplitude is
\begin{equation}
 a_x=\sum_{b:Fb=x}\zeta(b)\prod_j(\cos\phi_j)^{1-b_j}(\sin\phi_j)^{b_j},
 \qquad |\zeta(b)|=1.
\end{equation}
The Pauli multiplication phases, including the factors of $i$, give $\zeta(b)\in\{1,i,-1,-i\}$. Their phase exponent is quadratic modulo four, and its parity is linear. Thus the branches split into real and imaginary quadratures; terms in opposite quadratures have no probability cross term. Terms in the same quadrature can interfere constructively or destructively in a common fiber.

If $F$ is injective, probabilities are precisely the classical pushforward of independent branch bits with biases $\sin^2\phi_j$. More generally, pairwise quadrature separation in every fiber again removes all cross terms. These are exact algebraic statements about the displayed word. They do not prove a general all-register small-angle collision lower bound. In particular, the number of collision directions alone is not used here to claim additivity on an entire one-Pauli interval.

\section{Figure evaluation and verification conventions}\label{sec:repro}
Figure~\ref{fig:shortcut} of the Letter evaluates the four exact Pauli rotations and checks normalization, endpoint fidelity, and concurrence at the pulse boundaries. Figure~\ref{fig:regions} evaluates the three strict inequalities of Sec.~\ref{sec:regions}, propagates their certified sets along the known minimizing branches, and adds the rigorously proved rational and diagonal bands. Sampling illustrates the sufficient certificates; all analytic bands have separate proofs above. Figure~\ref{fig:converse} evaluates the analytic converse exponent numerically and performs no synthesis optimization. The manuscript uses exact angles, radicals, and rational thresholds for its stated results.

\end{document}

%% file: pretex.tex
\usepackage[dvipsnames]{xcolor}
\usepackage{amsmath,amssymb,amsfonts,mathtools,bm}
\usepackage{graphicx}
\usepackage{booktabs}
\usepackage{enumitem}
\usepackage{hyperref}
\usepackage{url}
\usepackage{cleveref}
\hypersetup{colorlinks=true,citecolor=blue,linkcolor=blue,urlcolor=blue,breaklinks=true}
\newtheorem{theorem}{Theorem}
\newtheorem{proposition}[theorem]{Proposition}
\newtheorem{lemma}[theorem]{Lemma}

\newcommand{\ket}[1]{|#1\rangle}

\newcommand{\proj}[1]{|#1\rangle\!\langle#1|}

\newcommand{\supp}{\operatorname{supp}}

\newcommand{\calP}{\mathcal P}

\newcommand{\eps}{\varepsilon}